\newcommand{\nc}{\newcommand}
\nc{\Cg}{\mathsf{Cg}}
\nc{\cless}{\preccurlyeq}
\nc{\Con}{\mathsf{Con}}
\nc{\Conl}{\mathsf{Con}_L}
\nc{\Conr}{\mathsf{Con}_R}
\nc{\Cut}{\mathsf{Cut}}
\nc{\D}{\mathsf{D}}
\nc{\defs}{:=}
\nc{\deriv}{\vdash}
\nc{\exptime}{\mathsf{EXPTIME}}
\nc{\forces}{\Vdash}
\nc{\G}{\mathsf{G}}
\nc{\GDL}{{\G_{\bMDL}}}
\nc{\harm}{\mathtt{hrm}}
\nc{\harmenemy}{\mathtt{hrm\_en}}
\nc{\K}{\mathsf{K}}
\nc{\Land}{\bigwedge}
\nc{\last}[1]{\mathsf{last}\!\left(#1\right)}
\nc{\lastl}[1]{\mathsf{last}_L\!\left(#1\right)}
\nc{\lastr}[1]{\mathsf{last}_R\!\left(#1\right)}
\nc{\Lor}{\bigvee}
\nc{\lrarr}{\leftrightarrow}
\nc{\MDL}{\mathsf{bMDL}}
\nc{\Mimamsa}{{M\={\i}m\={a}\d{m}s\={a}}}
\nc{\Mon}{\mathsf{Mon}}
\nc{\NEC}{\Box}
\nc{\Obl}{\mathcal{O}}
\nc{\Ob}{\Obl}
\nc{\plusplus}{{+\!\!+}}
\nc{\Pow}{\mathcal{P}}
\nc{\rarr}{\rightarrow}
\nc{\seq}{\Rightarrow}
\nc{\sts}[1]{| #1 |}
\nc{\syena}{\mathtt{sy}}
\nc{\T}{\mathsf{T}}
\nc{\trs}[1]{\llbracket #1 \rrbracket}
\nc{\Var}{\mathsf{Var}}
\nc{\W}{\mathsf{W}}
\nc{\bMDL}{\mathsf{bMDL}}
\newcommand{\Sf}{\mathsf{S4}}
\newcommand{\Sfi}{\mathsf{S5}}
\newcommand{\desharm}{\mathtt{des\_hrm\_en}}
\title{{\Mimamsa} deontic logic: proof theory and
  applications\thanks{Supported by FWF START project Y544-N23, FWF project V400 and EU
    \mbox{H2020-MSCA} grant 660047.}}
\author{Agata Ciabattoni\inst{1} \and Elisa Freschi\inst{2} \and Francesco A. Genco\inst{1} \and Bj{\"o}rn Lellmann\inst{1}}
\institute{Vienna University of Technology \\ \email{agata@logic.at genco@logic.at lellmann@logic.at} \and Institute for the Cultural and Intellectual History of Asia,\\Austrian Academy of Sciences, Vienna \\ \email{elisa.freschi@gmail.com}}
\date{}
\begin{document}
\maketitle
\date{}

\begin{abstract}
Starting with the deontic principles in {\Mimamsa} texts we introduce a new deontic logic. 
We use general proof-theoretic methods to obtain a cut-free sequent calculus for this logic, resulting in decidability, complexity results and neighbourhood semantics. 
The latter is used to analyse a well known example of conflicting obligations from the Vedas.

\end{abstract}

\section{Introduction}

\label{sec:introduction}
We provide a first bridge between formal logic and the {\Mimamsa} school of Indian philosophy.
Flourishing between the last centuries BCE and the 20th century, the main focus of this school is 
the interpretation of the prescriptive part of the Indian Sacred Texts (the \emph{Vedas}).
In order to explain ``what has to be done'' according to the Vedas,
{\Mimamsa} authors have proposed a rich body of 
deontic, hermeneutical and linguistic principles (\textit{metarules}), called \emph{ny\={a}ya}s, 
which were also used to find rational explanations for seemingly contradicting obligations.

Even though the {\Mimamsa} interpretation of the Vedas has pervaded almost every other school of 
Indian philosophy, theology and law, little research has been done on
the \emph{ny\={a}ya}s. Moreover, since not many scholars working on
{\Mimamsa} are trained in formal logic, and the untranslated texts are
inaccessible to logicians,
these deontic 
principles have not yet been studied
using methods 
of formal logic.  

In this paper starting from 
the deontic \emph{ny\={a}ya}s we define a new logic -- \emph{basic {\Mimamsa} 
deontic logic} ($\bMDL$ for short) -- that simulates {\Mimamsa} reasoning. 
After introducing the logic as an extension of modal logic
$\mathsf{S}\4$ with axioms obtained by formalising these principles
\footnote{While some of the
\textit{ny\={a}ya}s we consider are listed in the Appendix of \cite{Kane}, we extracted the remaining ones
directly from {\Mimamsa} texts, see \cite{Atinerpaper}.} and providing
a cut-free sequent calculus and neighbourhood-style semantics for it, we use
$\bMDL$
to 
reason about a well known example of seemingly conflicting
obligations contained in the Vedas. This example concerns
the malefic sacrifice called \emph{\'Syena} 
and proved to be a stumbling block for many {\Mimamsa} scholars.
The solution to this controversy provided by the 
semantics of $\bMDL$ turns out to coincide with
that of Prabh\={a}kara, one of the chief {\Mimamsa} authors, which
previous 
approaches failed to make sense of, e.g., \cite{Stcherbatsky1926}.
Our formal analysis relies essentially on the cut-free calculus for $\MDL$ introduced with the aid of the general method 
from~\cite{Lellmann:2013fk}.

\medskip
\noindent Through the paper we refer to the following {\Mimamsa} texts: the 
\emph{P\={u}rva {\Mimamsa} S\={u}tra} (henceforth 
PMS, ca. 3rd c. BCE), its commentary, the
\emph{\'{S}\={a}barabh\={a}\d{s}ya} (\'{S}Bh), the main subcommentary, Kum\={a}rila's \textit{Tantrav\={a}rttika} (TV).

\paragraph{Related work.}
Logic (mainly classical) has already been successfully used to investigate
other schools of Indian thought. 
In particular for  Navya Ny\={a}ya
formal analyses have contributed to a fruitful exchange of ideas
between disciplines~
\cite{Ganeri2008},
however, no deontic modalities were considered. 
A logical analysis of the deontic aspects of the \emph{Talmud}, another
sacred text, is given
in
\cite{Gabbay}. 
The deontic
logic used there is based on intuitionistic logic and
contains an external mechanism for resolving conflicts among obligations. 
Deontic logics similar but not equivalent to $\bMDL$ 
include Minimal Deontic Logic 
\cite{Goble:2013} and extensions of monotone modal logic
with some versions of the $\D$ axiom 
\cite{Indrzejczak:2005,Orlandelli:2014}. 
The latter papers 
also introduce cut-free sequent calculi, but do not mix alethic and deontic modalities.

\section{Extracting a deontic logic from {\Mimamsa} texts}
\label{sec:extraction}
The use of logic to simulate {\Mimamsa} ways of reasoning is motivated by
their rigorous theory of inference  
and attention for possible
violations  of it. For instance Kum\={a}rila, one of the chief {\Mimamsa} authors, emphasises the fact that a text
is not epistemically reliable if the whole chain of transmission is reliable, but not its beginning. The classical example
is that of ``\textit{a chain of truthful blind people transmitting information concerning colours}'' (TV on PMS 1.3.27).

At this point, the problem amounts to which logic should be adopted. 
The simplest logical system
for dealing with obligations
is 
\emph{Standard Deontic Logic} $\mathsf{SDL}$, that
 extends classical logic by
 a unary operator $\Ob$ read as
 ``It is obligatory that...'' satisfying the axioms of
modal logic $\K\D$ \cite{Blackburn:2001fk,Gab:2013}. Though
simple and well studied, $\mathsf{SDL}$ is not suited to deal with conflicting obligations,
which are often present in the Vedas and in {\Mimamsa} reasoning.
A well known example from the Vedas consists of the following norms concerning the
malefic \'Syena sacrifice, which is enjoined in case one desires to harm his enemy, since it kills them: 
\begin{quote}

A. ``\textit{One should not harm any living being}"

B. ``\textit{One should sacrifice bewitching with the \'{S}yena}" 
\end{quote}
Any reasonable formalisation of the statements A.\ and B.\ leads in $\mathsf{SDL}$ to a contradiction.
\label{nonSDL}
Given that the {\Mimamsa} authors embraced the principle of non-contradiction and invested all their efforts
in creating a \emph{consistent} deontic system, to
provide adequate formalisations of {\Mimamsa} reasoning a different logic is needed. To this aim we introduce
\emph{basic {\Mimamsa} deontic logic} ($\bMDL$)
by extracting its properties directly from {\Mimamsa} texts.

The language of $\bMDL$ extends that of classical logic with 
the binary modal operator $\Ob( \cdot / \cdot )$ from dyadic deontic logics and the unary modal operator $\NEC$ of $\Sf$.
While the latter is used to formalise the auxiliary conditions of general deontic principles,
the former allows us to impose conditions on obligations describing the situation in which the obligation holds. 
Hence a formula $\Ob( \varphi  / \psi )$ can be read as ``$\varphi$ it is obligatory given $\psi$''.

The use of the dyadic operator, which is a reasonably standard approach to avoid
the problem with conflicting obligations (see, e.g., \cite{Hilpinen2001} and \cite{Goble:2013}),
is also suggested in the metarule ``\textit{Each action is prescribed in relation to a responsible person 
who is identified because of her desire}'' (cf. PMS 6.1.1--3).

As described in Sec. \ref{sec:extracting-axioms} the
properties of the deontic operator $\Ob( \cdot / \cdot )$ of
$\bMDL$ (definition below) are directly extracted from the \emph{ny\={a}ya}s.

\begin{definition}\label{def:logic}
Basic {\Mimamsa} deontic logic $\mathsf{bMDL}$ extends (any Hilbert system for) $\Sf$ with the following axioms (taken as schemata):
\begin{enumerate}[label=(\arabic*)]
\item \label{ax:1}   $(\NEC (\varphi \rarr \psi ) \land  \Ob( \varphi / \theta ) ) \rarr  \Ob(\psi / \theta)$ 
\item \label{ax:2}$\NEC (  \psi  \rarr \neg \varphi ) \rarr \neg ( \Ob ( \varphi /  \theta ) \land \Ob ( \psi  /  \theta ))$
\item \label{ax:3} $(\NEC (( \psi  \rarr  \theta ) \land ( \theta  \rarr  \psi ) ) \land \Ob ( \varphi  /  \psi )) \rarr \Ob ( \varphi  /  \theta ) $
\end{enumerate}
\end{definition}
The choice to use classical logic as base system, in contrast to the use of intuitionistic logic in Gabbay et al.'s deontic
logic of the Talmud \cite{Gabbay}, is due to
various metarules by {\Mimamsa} authors implying the legitimacy of the reductio ad absurdum argument RAA;
these include the following (contained in Jayanta's book \emph{Ny\={a}yama\~{n}jar\={\i}}):
\label{noncontradiction}
``\textit{When there is a contradiction ($\varphi$ and not $\varphi$), at the denial of one (alternative), the other is known (to be true)}''.
Therefore, if we deny $\neg \varphi$ then $\varphi$ holds, which gives RAA.
\subsection{From {\Mimamsa} \emph{ny\={a}ya}s to Hilbert axioms}
\label{sec:extracting-axioms}
Axiom~\ref{ax:1} arises from three different principles, discussed in \cite{Atinerpaper}; among them the following abstraction 
of the \emph{ny\={a}ya}s in the {\it Tantrarahasya} IV.4.3.3 (see \cite{Freschi2012})
\begin{quote}
 If the accomplishment of X presupposes the
 accomplishment of Y, the obligation to perform X prescribes also Y.
\end{quote} This principle
leads to $ (\NEC (\varphi \rarr \psi ) \land  \Ob( \varphi / \theta )
)\rarr  \Ob(\psi / \theta)$, where we represent the accomplishment of
X and Y as $\varphi$ and $\psi$ respectively, and we stipulate that the conditions on the two prescriptions, represented by $\theta$, are the same.
Note that we use the operator $\NEC$, here as well as in the following axioms, to guarantee that the correlations between formulae are not accidental.

Axiom~\ref{ax:2} arises from the so-called {\em principle of the half-hen}, which is 
implemented in different {\Mimamsa} 
contexts (e.g., TV on PMS 1.3.3); an abstract representation of it is:
\begin{quote}
Given that purposes Y and Z exclude each other, if one should use item X for the purpose Y, then it cannot be the case that one should use it at the same time for the purpose Z.
\end{quote}
This principle stresses the incongruity of enjoining someone to act in
contradiction with himself on some object. The corresponding axiom is
\mbox{$\NEC (  \psi  \rarr \neg \varphi ) \rarr \neg ( \Ob ( \varphi /
  \theta ) \land \Ob ( \psi  /  \theta ))$} which guarantees that if $\varphi$ and $\psi$ exclude each other, then they cannot both be 
obligatory under the same conditions $\theta$.
Finally, Axiom~\ref{ax:3} arises from a discussion (in \'{S}Bh on PMS 6.1.25) on the eligibility to perform sacrifices (see \cite{Atinerpaper}),
which can be abstracted as follows:
\begin{quote}
If conditions X and Y are always equivalent, given the duty to perform Z under the condition X, the same duty applies under Y.
\end{quote}
We formalise this principle
as $(\NEC (( \psi  \rarr  \theta ) \land ( \theta  \rarr  \psi ) )
\land \Ob ( \varphi  /  \psi )) \rarr \Ob ( \varphi  /  \theta ) $,
where the conditions X and Y are represented by $\psi$ and $\theta$ respectively,
and $\varphi$ represents that the action Z is performed.

While the  properties of $\Ob( \cdot / \cdot )$ are taken from {\Mimamsa} texts, the same cannot be done for $\NEC$ because
{\Mimamsa} authors do not conceptualise necessity as separate from epistemic certainty. The established choices for a logic for the alethic necessity operator $\NEC$ are $\Sf$ and $\Sfi$. To keep the system as simple as possible, and not having found
any principle motivating the additional properties of
$\Sfi$, we have chosen $\Sf$.

\section{Proof Theory of $\bMDL$}
\label{sec:sequent-calculus}

Hilbert systems are 
convenient ways of defining logics, but are not very useful for 
proving theorems in and about the logics (e.g., decidability, consistency).

For this purpose  we introduce 
a cut-free sequent calculus $\GDL$ for
$\bMDL$ and use it to show that, for certain issues, $\bMDL$ simulates {\Mimamsa} ways of reasoning. 
As usual, a \emph{sequent} is a tuple $\Gamma \seq \Delta$ of multisets of formulae
interpreted as $\Land \Gamma
\to \Lor \Delta$.
To construct $\GDL$ we use the
translation from axioms to rules and the construction of 
a
cut-free calculus from these rules from~\cite{Lellmann:2013,Lellmann:2013fk}. Since the latter is not fully
automatic, we provide some details. 

First, by
\cite[Thm.~26]{Lellmann:2013fk}, we
automatically obtain from 
Def.~\ref{def:logic}\ref{ax:1}-\ref{ax:3}
the rules
  \[  \infer[\Mon ']{ \NEC \varphi, \Obl (\psi/\theta) \seq \Obl (\chi/\xi)}{ \varphi, \psi \seq \chi \qquad \seq \varphi, \psi \qquad \chi \seq \varphi \qquad \theta \seq \xi \qquad \xi \seq \theta }
\]
\[ \infer[\Cg]{ \NEC \varphi, \Obl (\psi/\theta) \seq \Obl (\chi/\xi)}{ \varphi, \theta \seq \xi  \quad \varphi, \xi \seq  \theta \quad \seq \varphi, \theta, \xi \quad \theta,\xi  \seq \varphi \quad \psi \seq \chi \quad \chi \seq \psi } 
\]
\[
\infer[\D_{2} ']{ \NEC \varphi , \Obl (\psi/\theta) , \Obl (\chi/\xi) \seq }{ \varphi,\psi,\chi \seq \quad \seq \varphi,\psi \quad  \seq \varphi, \chi  \quad  \theta \seq \xi \quad  \xi \seq \theta}
  \]
From these rules we 
construct 
a new set of rules saturated under cuts from which 
the rules above are
derivable. This step is not automatic and amounts to repeated \emph{cutting between
rules} 
\cite[Def.~7]{Lellmann:2013fk}: given any two rules
 we obtain a new rule whose conclusion is the result of a cut on a
 formula principal in the conclusions of both rules, and whose premisses
 contain
 all possible cuts between the premisses of the original rules on the
 variables occurring in this formula. We start from the set containing the rules above and those of $\mathsf{S}\4$ and
first cut the rules $\4$ (Fig.~\ref{fig:sequent-rules}) with $\Mon'$
and $\4$ with $\Cg$ on the boxed formula to 
obtain the rules
\[
  \infer[]{ \Gamma , \Obl (\psi/\theta)\seq \Obl (\chi/\xi) , \Delta }
  {   \Gamma^{\Box} ,\psi \seq \chi  & \theta \seq \xi & \xi \seq \theta  }
  \qquad
  \infer[]{\Gamma, \Obl (\psi / \theta) \seq \Obl(\chi / \xi),\Delta
  }
  {\Gamma^\Box, \theta \seq \xi & \Gamma^\Box, \xi \seq \theta & \psi
    \seq \chi & \chi \seq \psi
  }
\]
where 
$\Gamma^\Box$ is 
obtained
from $\Gamma$ by deleting every occurrence of a formula not of the
form $\Box \varphi$. Now cutting these two rules in either possible way yields the
rule $\Mon$ 
(Fig.~\ref{fig:sequent-rules}), and cutting this and  $\4$ 
with $\D_2'$ yields $\D_2$. We obtain $\D_1$ closing $\D_2$ under contraction, i.e., identifying $\varphi$ with $\theta$ and $\psi$ with $\chi$ and contracting conclusion and premiss.

The sequent calculus $\GDL$ consists of
the rules in Fig.~\ref{fig:sequent-rules} together with the standard
propositional $\G3$-rules (with principal formulae copied into the
premisses) \cite{Kleene:1952fk} and the standard left rule for the constant $\bot$.
We write $\deriv_\GDL \Gamma \seq \Delta$ if
 $\Gamma \seq \Delta$ is derivable using these
rules.
We denote extensions of $\GDL$ with structural rules from
Fig.~\ref{fig:structural-rules} by appending their names, collecting
$\Conl$ and $\Conr$ into $\Con$. E.g., $\GDL \Con\W$ is $\GDL$ extended
with Contraction and Weakening.
\begin{figure}[t]
\hrule\medskip
  \[  \infer[\4]{ \Gamma \seq \Box \varphi,\Delta}{ \Gamma^\Box \seq \varphi}
  \qquad
  \infer[\T]{\Gamma, \Box \varphi \seq \Delta}{\Gamma, \Box \varphi,\varphi \seq \Delta}
  \qquad
  \infer[\Mon]{\Gamma, \Obl (\varphi / \psi) \seq \Obl (\theta / \chi),\Delta}{ \Gamma^\Box, \varphi
    \seq \theta & \Gamma^\Box, \psi \seq \chi &  \Gamma^\Box, \chi \seq \psi}
  \]
  \[
  \infer[\D_1]{ \Gamma, \Obl( \varphi / \psi ) \seq \Delta }{ \Gamma^\Box, \varphi \seq \;}
  \qquad
  \infer[\D_2]{ \Gamma, \Obl (\varphi / \psi), \Obl (\theta / \chi) \seq \Delta}{ \Gamma^\Box,
    \varphi, \theta \seq \; & \Gamma^\Box, \psi \seq \chi & \Gamma^\Box, \chi \seq \psi}
  \]
\hrule
\caption{The modal rules rules of $\GDL$}
\label{fig:sequent-rules}
\end{figure}

By construction \cite{Lellmann:2013,Lellmann:2013fk} we
have: 
\begin{theorem}\label{thm:cut-elim}
  The rule $\Cut$ is admissible in $\GDL\Con\W$.
\end{theorem}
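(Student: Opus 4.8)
The plan is to prove admissibility of $\Cut$ in $\GDL\Con\W$ by the standard double induction: a principal induction on the complexity of the cut formula and a subordinate induction on the sum of the heights of the derivations of the two premisses of the cut. For each application of $\Cut$ I would distinguish cases according to whether the cut formula is principal in the last rule of each of the two premiss derivations.

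The routine cases come first. If the cut formula is not principal in at least one premiss, the cut permutes upward past the last rule of that premiss; this strictly lowers the height while keeping the cut formula fixed, so the subordinate hypothesis applies. A cut whose cut formula was introduced by $\W$ is discarded in favour of a weakening, while $\Con$ is what lets us contract back together the context formulae duplicated when a cut is permuted past a branching rule; the interaction of $\Cut$ with contraction on the cut formula itself is handled by the standard treatment for calculi with explicit structural rules, as ensured by the framework of \cite{Lellmann:2013fk}. The propositional principal cases are the familiar $\G3$ reductions. The one alethic principal case --- $\4$ on the right against $\T$ on the left with cut formula $\Box\varphi$ --- is resolved by the standard $\Sf$ argument: first permute the cut into the premiss of $\T$, lowering the height while retaining the copied $\Box\varphi$, then cut on $\varphi$, whose complexity is strictly smaller, invoking the principal hypothesis.

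The substance of the proof lies in the principal case for the deontic modality, where the cut formula $\Obl(\varphi/\psi)$ is introduced on the right by $\Mon$ (the only rule with $\Obl$ in its right-hand conclusion) and on the left by one of $\Mon$, $\D_1$ or $\D_2$. This is precisely where the construction of $\GDL$ pays off: because the rule set was explicitly saturated under cutting between rules (Sec.~\ref{sec:sequent-calculus}), each of these three principal cuts is already absorbed. For each pairing I would replace the cut on $\Obl(\varphi/\psi)$ by a single application of the rule obtained as their cut, whose premisses are cuts of the original premisses on the propositional subformulae $\varphi$, $\psi$ and, through the $\Gamma^\Box$ side-condition, on boxed context formulae. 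Since all these cut formulae are strictly less complex than $\Obl(\varphi/\psi)$, the principal hypothesis disposes of them, and $\Con$ and $\W$ reconcile the $\Gamma^\Box$-contexts shared among the premisses.

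The main obstacle is exactly this deontic principal case: one must verify that for each of $\Mon$/$\Mon$, $\Mon$/$\D_1$ and $\Mon$/$\D_2$ the cut between the two rules yields an instance of a rule already present, and that the bookkeeping of the $\Gamma^\Box$-contexts and of the condition-comparing premisses works out. In our setting this is guaranteed in general by the cut-between-rules construction and its associated cut-elimination theorem from \cite{Lellmann:2013,Lellmann:2013fk}, so the verification reduces to confirming that $\{\4,\T,\Mon,\D_1,\D_2\}$ is closed under these cuts --- which is how these very rules were produced in Sec.~\ref{sec:sequent-calculus}. Hence $\Cut$ is admissible in $\GDL\Con\W$.
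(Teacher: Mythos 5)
Your route is genuinely different from the paper's: the paper does not run a cut-elimination induction at all, but first replaces $\GDL\Con\W$ by the equivalent variant in which the principal formulae of the propositional rules and of $\T$ are \emph{not} copied into the premisses, and then observes that this system satisfies the general sufficient criteria for cut elimination of \cite{Lellmann:2013,Lellmann:2013fk}, the saturation under cuts between rules being the substantive condition. A self-contained double induction like yours is a legitimate alternative, and your treatment of the deontic principal cases is essentially right: for $\Mon$/$\Mon$, $\Mon$/$\D_1$ and $\Mon$/$\D_2$ the cut on $\Obl(\varphi/\psi)$ reduces to cuts on $\varphi$ and $\psi$ (which are strictly smaller) feeding an instance of a rule already in the set --- this is exactly what the saturation bought.

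The gap is in what you dismiss as routine. You claim that whenever the cut formula is not principal in one premiss, the cut permutes upward past the last rule of that premiss. This fails for the transitional rules $\4$, $\Mon$, $\D_1$, $\D_2$ when the cut formula is a boxed formula lying in the antecedent context, because these rules do not copy their context: they delete everything except $\Gamma^\Box$ and erase the succedent. Concretely, suppose the right premiss of the cut is $\Sigma, \Box\varphi \seq \Box\psi, \Pi$ obtained by $\4$ from $\Sigma^\Box, \Box\varphi \seq \psi$, and the left premiss is $\Gamma \seq \Box\varphi, \Delta$. Cutting into the premiss gives $\Gamma, \Sigma^\Box \seq \Delta, \psi$, to which $\4$ can no longer be applied: $\Gamma$ need not be boxed and $\Delta$ pollutes the succedent. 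The same obstruction arises with the $\Gamma^\Box$-contexts of $\Mon$, $\D_1$, $\D_2$; your enumeration of modal principal cases ($\4$ against $\T$ only) misses all of these, and they are exactly what makes $\Sf$-style cut elimination nontrivial. The standard repair: push the cut up on the other side until $\Box\varphi$ is principal there by $\4$ with premiss $\Gamma^\Box \seq \varphi$; re-apply $\4$ to get the fully boxed sequent $\Gamma^\Box \seq \Box\varphi$ (same height), cut \emph{this} against the transitional premiss (same cut formula, smaller height sum, so the subordinate hypothesis applies), obtaining e.g.\ $\Gamma^\Box, \Sigma^\Box \seq \psi$, then re-apply the transitional rule and weaken. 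A smaller second soft spot: with explicit $\Con$ in the system, a cut whose cut formula is contracted cannot simply be permuted either; one needs a mix/multicut formulation (or height-preserving admissibility of contraction, which is what Lem.~\ref{lem:adm-con} later provides for $\GDL$ itself), and deferring this to ``the standard treatment'' should at least be made explicit.
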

\begin{proof}
  Using the structural rules the system $\GDL\Con\W$ is equivalent to
  the system $\GDL'\Con\W$
  in which the principal formulae of the propositional rules and the
  rule $\T$ are
  not copied into the premisses. 
  By construction (and straightforward inspection in the
  non-principal cases) the rules of $\GDL'\Con\W$ satisfy the general sufficient
  criteria for
  cut elimination established 
  in
  \cite{Lellmann:2013,Lellmann:2013fk}. Cut-free derivations in
  $\GDL'\Con\W$ are converted into cut-free derivations in $\GDL\Con\W$ using the
  structural rules. \qed
\end{proof}
\begin{figure}[t]
  \hrule
  \centering
  \[
  \infer[\W]{\Gamma,\Sigma \seq \Delta,\Pi}{\Gamma \seq \Delta}
  \;\;
  \infer[\Conl]{\Gamma,\varphi \seq \Delta}{\Gamma, \varphi, \varphi \seq \Delta}
  \;\;
  \infer[\Conr]{\Gamma \seq \varphi,\Delta}{\Gamma \seq \varphi,\varphi, \Delta}
  \;\;
  \infer[\Cut]{\Gamma,\Sigma \seq \Delta, \Pi}{\Gamma \seq \varphi,\Delta &
    \Sigma, \varphi \seq \Pi}
  \]
  \hrule\medskip
  \caption{The structural rules}
  \label{fig:structural-rules}
\end{figure}

The methods in~\cite{Lellmann:2013,Lellmann:2013fk}
now automatically yield also an $\exptime$-complexity result. However,
we consider 
an explicit proof search procedure for $\GDL$ which will be used 
in Sec.~\ref{sec:semantics}.
First 
we establish some preliminary results.

\begin{lemma}\label{lem:adm-con}
  The Contraction and Weakening rules are admissible in $\GDL$.
\end{lemma}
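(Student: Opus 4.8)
The plan is to prove both rules height-preserving admissible by induction on the height of derivations in $\GDL$, establishing Weakening first and then Contraction, since the latter may appeal to the former. The structural feature driving every modal case is that the rules $\4$, $\Mon$, $\D_1$, $\D_2$ transmit to their premisses only the boxed formulae of the antecedent context (through $\Gamma^\Box$), while the succedent context $\Delta$ is merely carried and never occurs in any premiss.

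For Weakening I would induct on the height of a derivation of $\Gamma \seq \Delta$ and inspect the last rule. Since the $\G3$ axioms carry arbitrary contexts, the base case is immediate, and the propositional rules and $\T$ are routine. For the modal rules the two directions are easy for complementary reasons: weakening on the right adds formulae only to $\Delta$, which does not appear in any premiss, so the same rule instance applies unchanged; weakening on the left by a non-boxed formula leaves $\Gamma^\Box$ unchanged (so again the rule reapplies), while weakening by a boxed formula $\Box\alpha$ is absorbed by applying the induction hypothesis to each premiss to insert $\Box\alpha$ into $\Gamma^\Box$.

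For Contraction I would again induct on the height and inspect the last rule. Contraction within a succedent context or within the non-boxed part of an antecedent context is trivial, as these formulae do not reach the premisses of the modal rules; contraction within the boxed part of the context, or of the principal formula of a propositional rule or of $\T$ (whose principal formula is copied into the premiss), is absorbed by the induction hypothesis applied to the premisses. The one genuinely delicate case is contracting two \emph{identical principal} obligations in $\D_2$, i.e.\ an instance with conclusion $\Gamma, \Obl(\varphi/\psi), \Obl(\varphi/\psi) \seq \Delta$: here $\D_2$ cannot simply be reapplied, since after contraction only one obligation remains. This is precisely the situation for which $\D_1$ was introduced --- by closing $\D_2$ under contraction --- and it resolves the case: from the premiss $\Gamma^\Box, \varphi, \varphi \seq$ the induction hypothesis yields $\Gamma^\Box, \varphi \seq$, and a single application of $\D_1$ gives the desired $\Gamma, \Obl(\varphi/\psi) \seq \Delta$.

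The main obstacle is exactly this last case, and more generally confirming that no rule produces a multiset of modal principal formulae that the existing rules cannot re-derive after contraction. I expect every other case to reduce mechanically to the induction hypothesis; the role of $\D_1$ is what makes the argument close, which retrospectively justifies its inclusion in $\GDL$.
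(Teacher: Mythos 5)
Your proof is correct, and it takes a genuinely different route from the paper's. The paper handles Weakening exactly as you do (induction on the depth of the derivation), but for Contraction it does not argue directly: it invokes the general admissibility criteria of \cite[Thm.~16]{Lellmann:2013fk} resp.\ \cite[Thm.~2.5.5]{Lellmann:2013}, observing that the rule set $\GDL$ is \emph{contraction closed} and already contains the Kleene-style modified versions of $\T$ and the propositional rules. What you have done is unfold that citation into a self-contained induction, and your analysis isolates precisely the content of ``contraction closure'' for this calculus: the only non-trivial case is an instance of $\D_2$ whose two principal formulae $\Obl(\varphi/\psi)$, $\Obl(\theta/\chi)$ coincide, and then the contracted conclusion is recovered by applying the induction hypothesis to the premiss $\Gamma^\Box,\varphi,\varphi\seq\;$ and finishing with $\D_1$ --- which is exactly why the paper built $\D_1$ into $\GDL$ in the first place (it is introduced in Section~3 as the closure of $\D_2$ under contraction). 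Your observation that the succedent context and the non-boxed antecedent context of the transitional rules never reach the premisses, while boxed context formulae and copied principal formulae are absorbed by the induction hypothesis, is the same structural fact the general criteria rely on. The trade-off: the paper's appeal to the framework is shorter and emphasizes that admissibility holds uniformly for any contraction-closed rule set produced by the method of \cite{Lellmann:2013fk}, whereas your argument is elementary, independent of the external results, and makes visible \emph{why} the rule $\D_1$ must be present for the lemma to hold. One small remark: your opening suggestion that Contraction may need to appeal to Weakening turns out to be unnecessary here --- no case of your induction uses it --- precisely because principal formulae are copied into premisses, so no invertibility lemma is ever invoked.
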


\begin{proof}
  Admissibility of weakening is proved by induction on the depth of the derivation, while
  that of contraction follows from the general criteria in \cite[Thm.~16]{Lellmann:2013fk}
  resp. \cite[Thm.~2.5.5]{Lellmann:2013} since the rule set $\GDL$ is contraction
  closed and already contains the modified versions of $\T$
  and the propositional rules.\qed
\end{proof}

Thus suffices to consider \emph{set-based sequents}, i.e.,
tuples of sets of formulae instead of multisets. The rules of $\GDL$
are adapted to the set-based setting in the standard way.
Since boxed
formulae are always copied into the premisses of a rule, the
proof search procedure needs to include \emph{loop
  checking} to avoid infinite branches in the search tree. We do this
using 
histories, i.e., lists of (set-based)
sequents, where the last element is interpreted as the current sequent:

\begin{definition}[Histories]
  A \emph{history} $\mathcal{H}$ is a finite list $[\Gamma_1 \seq
  \Delta_1; \dots ; \Gamma_n \seq \Delta_n]$ of set-based sequents,
  where we write $\lastl{\mathcal{H}}$ (resp.\ $\lastr{\mathcal{H}}$)
  for $\Gamma_n$ (resp.\ $\Delta_n$) and $\last{\mathcal{H}}$ for
  $\lastl{\mathcal{H}} \seq \lastr{\mathcal{H}}$. Given another
  history $\mathcal{H}' = [\Sigma_1 \seq \Pi_1; \dots ; \Sigma_m \seq
  \Pi_m]$ with $n \leq m$ we write $\mathcal{H} \cless \mathcal{H}'$ if for all $i\leq
  n$ we have $\Gamma_i = \Sigma_i$ and $\Delta_i = \Pi_i$. Finally, we
  write $\mathcal{H}\plusplus \mathcal{H}'$ for the concatenation of
  the two histories. 
\end{definition}

The proof search procedure for $\GDL$ is given in Algorithm~\ref{alg:proof-search}, where following \cite{Gore:1999} we call the propositional rules together with the rule $\T$ the
\emph{static} rules, $\Mon,\4,\D_1,\D_2$ are
called \emph{transitional} rules. 
The algorithm
saturates the current sequent under backwards applications of the one-premiss
static rules, and then checks whether the result is an initial sequent
or could have been derived by a two-premiss static rule or a dynamic rule. The histories are used
to prevent the procedure from exploring a sequent twice (modulo weakening).

\begin{algorithm}[t]\label{alg:proof-search}
  \caption{The proof search procedure for $\GDL$}
  \KwIn{A history $\mathcal{H}$}
  \KwOut{Is $\last{\mathcal{H}}$ derivable in $\GDL$ given the history
    $\mathcal{H}$?}
  \BlankLine
  Saturate $\last{\mathcal{H}}$ under the one-premiss static rules\;
  \eIf{$\last{\mathcal{H}}$ is an initial sequent}{accept the history}
  {\For{every possible application of a two-premiss static
      rule to $\last{\mathcal{H}}$}
    {\For{every premiss $\Sigma \seq \Pi$ of this application}
      {recursively call the proof search procedure with input
        $\mathcal{H}\plusplus[\Sigma \seq \Pi]$\;
      }
    accept the application if each of these calls accepts 
  }
  \For{every possible application of a transitional rule
    to 
    $\last{\mathcal{H}}$}
  {\For{every premiss $\Sigma \seq \Pi$ of this application}
    {\eIf{there is an 
        $\mathcal{H}'\cless \mathcal{H}$
        with $\Sigma \subseteq
        \lastl{\mathcal{H}'}$ and $\Pi \subseteq
        \lastr{\mathcal{H}'}$
      }
      {reject the premiss
      }
      { 
        call the proof search procedure with input
      $\mathcal{H} \plusplus [\Sigma \seq \Pi]$\;
      accept the premiss if this call accepts
      }
    }
  accept the rule application 
  if each of the premisses is accepted 
  }
  accept the history if at least one of the possible applications is
  accepted
}
\end{algorithm}

\begin{lemma}[Termination]\label{lem:termination}
  The proof search procedure terminates.
\end{lemma}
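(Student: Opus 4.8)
The plan is to show that the tree of recursive calls made by the procedure is finite, by proving that it is finitely branching and that every one of its branches has finite length; König's lemma then yields termination. First I would establish the \emph{subformula property}: by inspection of all rules of $\GDL$ (the static rules as well as the transitional rules $\Mon,\4,\D_1,\D_2$) every premiss contains only subformulae of formulae occurring in its conclusion. Hence every sequent arising in the search uses only subformulae of the input sequent $\last{\mathcal{H}}$. As the set of these subformulae is finite — call it $\mathrm{Sf}$ — and sequents are set-based, there are only finitely many distinct sequents that can ever occur, at most $N \defs 2^{2\sts{\mathrm{Sf}}}$ of them. Finite branching is then immediate: at each call there are only finitely many applicable two-premiss static and transitional rules (bounded by the finitely many formulae in the current sequent), each with at most three premisses, so each call spawns finitely many recursive calls, and the saturation step and the loops range over finite sets, so each call performs finite work.

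To bound the length of a branch I would treat the two kinds of steps separately. The \emph{static phase} terminates because saturation under the one-premiss static rules only ever adds subformulae from $\mathrm{Sf}$ to the current (set-based) sequent, and therefore reaches a fixpoint after finitely many steps; moreover each backward application of a two-premiss static rule that is \emph{not redundant} strictly enlarges $\sts{\Gamma}+\sts{\Delta}$ along the branch it produces, so at most $2\sts{\mathrm{Sf}}$ such applications can occur before the sequent is fully saturated and only transitional rules (or the initial-sequent test) remain applicable.

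The crucial point is to bound the number of \emph{transitional} steps along a branch, and here the histories and the loop check do the work. Whenever a transitional premiss $\Sigma \seq \Pi$ is appended to the history, the check has verified that there is no prefix $\mathcal{H}' \cless \mathcal{H}$ with $\Sigma \subseteq \lastl{\mathcal{H}'}$ and $\Pi \subseteq \lastr{\mathcal{H}'}$; that is, the new sequent is not subsumed (modulo weakening) by any sequent already in the history. Consequently, listing the transitional premisses $T_1, T_2, \dots$ produced along a single branch, no $T_j$ is a componentwise subset of any earlier history entry, and since each earlier entry contains its unsaturated predecessor this gives $T_j \not\subseteq T_i$, hence $T_j \neq T_i$, for all $i < j$. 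Thus the $T_i$ are pairwise distinct set-based sequents over $\mathrm{Sf}$, so there are at most $N$ of them. Combining the two bounds, every branch consists of at most $N$ transitional steps separated by static phases of length at most $2\sts{\mathrm{Sf}}+1$, hence is finite; with finite branching, König's lemma makes the whole recursion tree finite and the procedure terminates.

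The step I expect to require the most care is the transitional bound: one has to check that the loop condition is really compared against the \emph{saturated} earlier sequents, so that subsumption cannot be spuriously avoided by leaving an earlier sequent unsaturated, and that the in-place saturation of history entries is compatible with the comparison $T_j \not\subseteq \lastl{\mathcal{H}'}$. A secondary subtlety is the harmless restriction to non-redundant two-premiss static applications that is needed to keep the static phase strictly decreasing.
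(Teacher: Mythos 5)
Your proof is correct and takes essentially the same route as the paper's: finitely many set-based sequents built from subformulae of the input, the history-based loop check bounding the number of transitional steps along any branch, and finite branching of the recursion tree. The paper's own proof is a much terser version of exactly this argument, so your elaboration (the static-phase bound via non-redundant applications, the pairwise-distinctness of transitional premisses, and the appeal to K{\"o}nig's lemma) just fills in details the paper leaves implicit.
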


\begin{proof}
  Given a history $\mathcal{H}$, the number $N$ of different set-based
  sequents which can be constructed from subformulae of the sequent
  $\last{\mathcal{H}}$ is exponential in the size of
  $\last{\mathcal{H}}$. Hence after at most $N$-many recursive calls
  of the 
  procedure the subroutine rejects every rule 
  application.  
  Furthermore, for every sequent there are only
  finitely many possible (backwards) applications of a rule from
  $\GDL$, so 
  the subroutine is executed only 
  a finite number of times.\qed
\end{proof}

\begin{proposition}\label{prop:correctness}
$\deriv_\GDL \Gamma \seq \Delta$ 
  iff the procedure accepts 
  $[\Gamma \seq \Delta]$.
\end{proposition}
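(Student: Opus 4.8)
The plan is to prove the biconditional in Proposition~\ref{prop:correctness} by establishing soundness and completeness of the proof search procedure with respect to derivability in $\GDL$, proceeding in two directions. Since the procedure is recursive and its behaviour depends on the history $\mathcal{H}$, the natural strengthening is to prove a statement about arbitrary histories, namely that the procedure accepts an input history $\mathcal{H}$ if and only if $\last{\mathcal{H}}$ is derivable using a derivation that does not ``loop back'' into the sequents recorded earlier in $\mathcal{H}$. Taking $\mathcal{H} = [\Gamma \seq \Delta]$ (a singleton history with no prior sequents to loop into) then yields the stated equivalence. Throughout I would rely on Lemma~\ref{lem:termination} to ensure the recursion bottoms out, so that both directions can be argued by induction on the (finite) recursion tree of the procedure.

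For the ``if'' direction (soundness of acceptance), I would argue by induction on the depth of the recursion tree generated by an accepting run. If $\last{\mathcal{H}}$ is accepted as an initial sequent, it is trivially derivable. Otherwise, acceptance arises either from a two-premiss static rule or a transitional rule, all of whose premisses are accepted; by the induction hypothesis each accepted premiss is derivable, and applying the corresponding rule of $\GDL$ reconstructs a derivation of $\last{\mathcal{H}}$. The only subtlety is the saturation under one-premiss static rules at the start: here I would invoke invertibility (admissible by Lemma~\ref{lem:adm-con}, since weakening and contraction are admissible and the static rules are the standard invertible $\G3$-rules together with $\T$), so that derivability of the saturated sequent transfers back to the original. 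The rejected-premiss clause in the transitional case does not affect this direction, since soundness only cares about what the procedure accepts.

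For the ``only if'' direction (completeness), I would assume $\deriv_\GDL \Gamma \seq \Delta$ and show the procedure accepts. Here the argument is more delicate because of loop checking: a premiss may be \emph{rejected} when some earlier history entry $\mathcal{H}' \cless \mathcal{H}$ already subsumes it (modulo the subset inclusions $\Sigma \subseteq \lastl{\mathcal{H}'}$, $\Pi \subseteq \lastr{\mathcal{H}'}$). The key observation is that because boxed formulae are copied into premisses and the transitional rules only shrink or preserve the relevant context, a looping branch of this kind corresponds to a situation that is already being handled higher up in the history, so no genuine proof obligation is lost. I would make this precise by showing that any $\GDL$-derivation of $\last{\mathcal{H}}$ can be transformed into one respecting the history constraint, exploiting admissibility of weakening (Lemma~\ref{lem:adm-con}) to absorb the subset inclusions, and then induct on such restricted derivations to produce an accepting run of the procedure.

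The main obstacle will be the completeness direction, specifically justifying that the loop-checking rejections are harmless: one must confirm that whenever a premiss is rejected because of an earlier subsuming sequent $\mathcal{H}' \cless \mathcal{H}$, the derivability that was available at $\mathcal{H}'$ already guarantees derivability of the rejected premiss, so that pruning the branch does not sever a needed part of the proof. This requires care about the interaction between the set-based saturation, the copying of boxed formulae, and the weakening (modulo-subset) matching in the history; the termination guarantee from Lemma~\ref{lem:termination} is what makes the supporting induction well-founded, but the correctness of the loop condition itself is the genuinely non-routine step.
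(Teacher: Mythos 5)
Your proposal takes essentially the same route as the paper: one direction by reading a $\GDL$-derivation off the accepting run, and the other by using admissibility of Weakening to replace an arbitrary derivation with a \emph{minimal} one (no branch contains a sequent subsumed by a sequent below it) --- exactly your ``derivation respecting the history constraint'' --- followed by induction on its depth. The obstacle you single out is resolved by precisely this normalisation: in a minimal derivation the loop check never fires on any premiss the derivation actually uses, so the pruning cannot sever a needed branch.
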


\begin{proof}
  If the procedure accepts the input,
  then we
  construct a derivation of $\Gamma \seq \Delta$ in $\GDL$
  by following the accepting choices of backwards applications of the
  rules, and labelling the nodes in the derivation with the sequents
  $\last{\mathcal{H}}$ for the histories $\mathcal{H}$ given as input
  to the recursive calls of the algorithm.

  Conversely, 
  if the set-based sequent $\Gamma
  \seq \Delta$ is derivable in $\GDL$, then by admissibility of
  Weakening there is a \emph{minimal} derivation of it, i.e., a
  derivation in which no branch contains two set-based sequents
  $\Sigma \seq \Pi$ and $\Omega \seq \Theta$ such that $\Sigma \seq
  \Pi$ occurs 
  on the path between $\Omega \seq \Theta$ and the root,
  and such that $\Omega\subseteq\Sigma$ and $\Theta \subseteq\Pi$. By
  induction on the depth of such a minimal derivation it can then be
  seen that the procedure accepts the input $[\Gamma \seq \Delta]$.\qed
\end{proof}

\subsection{Inner and Outer Consistency}
\label{sec:in-out-cons}

Having extracted a 
cut-free calculus from the axioms 
using the method in \cite{Lellmann:2013,Lellmann:2013fk},
soundness and completeness w.r.t. $\bMDL$ follow by
construction (Thm.~\ref{thm:strong-compl}). By the subformula property we then 
obtain
the \emph{inner
  consistency} of the logic $\bMDL$, i.e., the fact that $\bot$ is not
a theorem 
of the logic. This is one of the most basic requirements that our
logic 
should satisfy. 
But since $\bMDL$ was introduced with the purpose of simulating {\Mimamsa} reasoning,
it should also be consistent with
respect to the examples considered by the {\Mimamsa} authors such as
the \'Syena sacrifice, i.e.,
it should not enable us to derive a contradiction from the
formalisations of these
examples. We capture this in the notion of \emph{outer consistency}
or consistency 
in presence
of global assumptions. 
To make this precise
we consider the
consequence relation associated with the logic $\bMDL$ and the
corresponding relation associated with the
calculus $\GDL$.
Henceforth we denote by $\mathcal{A}$ any set of formulae of $\bMDL$.

\begin{definition}
The usual notion of derivability of a formula $\varphi$ from a set
$\mathcal{A}$ of assumptions in $\bMDL$ is denoted by $\mathcal{A} \deriv_\bMDL\varphi$.
  Similarly, for a set $\mathcal{S}$ of
  sequents, a sequent $\Gamma \seq \Delta$ is \emph{derivable from
    $\mathcal{S}$ in $\GDL\Cut$} if there is a derivation of $\Gamma
  \seq \Delta$ in $\GDL$ with leaves labelled with initial sequents,
  zero-premiss rules 
  or sequents from 
  $\mathcal{S}$. We then write $\mathcal{A} \deriv_\bMDL \varphi$
  resp.\ $\mathcal{S} \deriv_{\GDL\Cut} \Gamma \seq \Delta$.
\end{definition}

\begin{theorem}[Soundness and Completeness]\label{thm:strong-compl}
  For all sets $\mathcal{S}$ of sequents and sequents $\Gamma \seq
  \Delta$ 
  we have: 
  \[\mathcal{S} \deriv_{\GDL\Cut} \Gamma \seq \Delta \; \; \text{ iff }\;\; \{
    \Land \Sigma \to \Lor \Pi \mid \Sigma \seq \Pi \in \mathcal{S} \}
    \deriv_{\bMDL} \Land \Gamma \to \Lor \Delta\;.\]
\end{theorem}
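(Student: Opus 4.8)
The plan is to prove the two directions of the equivalence separately, exploiting the fact that the calculus is sound and complete for $\bMDL$ already at the level of theoremhood (without assumptions), which follows by construction from the method of \cite{Lellmann:2013,Lellmann:2013fk}, together with the cut-admissibility result of Theorem~\ref{thm:cut-elim}. The key observation linking the two consequence relations is that global assumptions and sequent-leaves both correspond, via the translation $\Sigma \seq \Pi \mapsto \Land \Sigma \to \Lor \Pi$, to extra axioms available throughout a derivation.

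For the direction from left to right, I would argue that a derivation of $\Gamma \seq \Delta$ in $\GDL\Cut$ using leaves from $\mathcal{S}$ can be read off as a proof in $\bMDL$ from the translated assumptions. Concretely, first I would establish the \emph{local soundness} of each rule of $\GDL$: whenever the translations of all premisses are derivable in $\bMDL$ (possibly from assumptions), so is the translation of the conclusion. This is a rule-by-rule check; the propositional and structural rules are routine, and the modal rules $\4,\T,\Mon,\D_1,\D_2$ are sound precisely because the calculus was extracted from the axioms \ref{ax:1}--\ref{ax:3} of Def.~\ref{def:logic} and the $\Sf$-rules, so the corresponding implications are $\bMDL$-theorems. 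The rule $\Cut$ is sound simply by modus ponens. Then, proceeding by induction on the height of the derivation of $\Gamma \seq \Delta$, each leaf is either an initial sequent or a zero-premiss rule (whose translation is a $\bMDL$-theorem) or a member of $\mathcal{S}$ (whose translation is one of the assumptions on the right-hand side), and each inference step preserves $\bMDL$-derivability from $\{\Land \Sigma \to \Lor \Pi \mid \Sigma \seq \Pi \in \mathcal{S}\}$ by local soundness. This yields $\Land \Gamma \to \Lor \Delta$ as required.

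For the converse direction, I would start from a Hilbert derivation witnessing $\{\Land \Sigma \to \Lor \Pi \mid \Sigma \seq \Pi \in \mathcal{S}\} \deriv_{\bMDL} \Land \Gamma \to \Lor \Delta$. The standard route is to simulate each Hilbert-style step inside the calculus: every $\bMDL$-axiom instance has its translation derivable in cut-free $\GDL$ (this is the completeness half that holds by construction of the calculus), every assumption corresponds to an available leaf sequent in $\mathcal{S}$, and the single inference rule modus ponens is simulated using the rule $\Cut$. Assembling these pieces gives a derivation of $\seq \Land\Gamma \to \Lor\Delta$ in $\GDL\Cut$ from $\mathcal{S}$, from which $\Gamma \seq \Delta$ follows by invertibility of the propositional rules for $\to$, $\land$ and $\lor$ (admissible by Lemma~\ref{lem:adm-con} together with cut) to decompose the single succedent formula back into the multisets $\Gamma$ and $\Delta$.

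The main obstacle I expect is the careful handling of the modal rules in the soundness direction, since their side conditions (the deletion operation yielding $\Gamma^\Box$) encode exactly the $\Sf$-interaction between $\NEC$ and $\Obl$, and one must verify that the translation of each such rule is genuinely a derived implication of $\bMDL$ rather than merely a sound schema over single-premiss reasoning; equivalently, this is where the extraction procedure of \cite{Lellmann:2013fk} must be taken at face value. A secondary subtlety is bookkeeping in the converse direction: the translation collapses a sequent's two multisets into one implication, so recovering $\Gamma \seq \Delta$ from a proof of its translation requires the invertibility lemmas, and one must ensure these remain available in the presence of the assumption leaves $\mathcal{S}$. Both obstacles are, however, largely mechanical given that cut-elimination and the construction-based soundness and completeness for theorems are already in hand.
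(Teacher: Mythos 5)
Your proposal follows essentially the same route as the paper's proof, just unfolded into explicit inductions. The paper argues that the standard Hilbert-versus-Gentzen correspondence for propositional calculi transfers to $\bMDL$ and $\G3$ extended with zero-premiss rules $\;\seq \theta$ for each modal axiom schema, and then invokes interderivability of these zero-premiss rules with the modal rules of $\GDL$ (citing \cite{Lellmann:2013,Lellmann:2013fk}, and illustrating it with a cut-free $\GDL$-derivation of Axiom~\ref{ax:2} via $\D_2$). Your two directions are exactly the two halves of that interderivability: the rule-by-rule local soundness check (each $\GDL$-rule corresponds to a $\bMDL$-derivable implication) is the statement that the modal rules are derivable from the axiom rules, and the simulation of Hilbert proofs via cut-free derivable axiom instances plus $\Cut$ for modus ponens is the converse half.

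There is one genuine, though easily repaired, omission: by Definition~\ref{def:logic}, $\bMDL$ extends a Hilbert system for $\Sf$, whose rules are modus ponens \emph{and} necessitation, so your sentence that ``the single inference rule modus ponens is simulated using the rule $\Cut$'' leaves necessitation unsimulated in the right-to-left direction. The fix is immediate: from $\mathcal{S} \deriv_{\GDL\Cut} \;\seq \varphi$ one obtains $\mathcal{S} \deriv_{\GDL\Cut} \;\seq \NEC\varphi$ by a single application of the rule $\4$ with empty contexts. Relatedly, in your left-to-right direction the local soundness of $\4$ (from $\Land\Gamma^\Box \to \varphi$ infer $\Land\Gamma \to \NEC\varphi \lor \Lor\Delta$) applies necessitation to a formula proved \emph{from the assumptions}; this is legitimate only because $\deriv_\bMDL$ with assumptions is the global consequence relation, as the paper's deduction theorem (Thm.~\ref{thm:deduction}) with its boxed assumptions makes explicit. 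Under a local reading the theorem as stated would be false --- e.g.\ $\{\;\seq p\} \deriv_{\GDL\Cut} \;\seq \NEC p$ by rule $\4$, while $p$ does not locally entail $\NEC p$ in $\Sf$ --- so this assumption deserves to be stated rather than left implicit. A minor further simplification: to recover $\Gamma \seq \Delta$ from $\;\seq \Land\Gamma \to \Lor\Delta$ you need no invertibility lemmas (and Lemma~\ref{lem:adm-con} concerns weakening and contraction, not invertibility); a single cut against the cut-free derivable sequent $\Land\Gamma \to \Lor\Delta, \Gamma \seq \Delta$ suffices, and $\Cut$ is available on that side of the equivalence anyway.
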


\begin{proof}
  The corresponding standard results for the propositional calculi
  transfer readily to the system $\bMDL$ and the
  Gentzen system $\G3$ with the zero-premiss rules  
  $\;\vcenter{
    \infer{\;\seq \theta}{\phantom{\;\seq\theta}}}\;$ for each modal
  axiom schema $\theta$ of $\bMDL$. The result then follows from interderivability
  of these rules
  with the modal rules from $\GDL$
  \cite{Lellmann:2013,Lellmann:2013fk}. As an example, the derivation of the zero-premiss rule for Axiom \ref{ax:2}, where $\alpha$ denotes \mbox{$\NEC (  \psi  \rarr \neg \varphi ) \rarr \neg ( \Ob ( \varphi /  \theta ) \land \Ob ( \psi  /  \theta ))$,} is as follows
  \[
\infer=[prop.]{\; \seq \strut \alpha}{\infer[\D_2]{\Ob ( \psi  /  \theta ), \Ob ( \varphi /  \theta ) , \Ob ( \varphi /  \theta ) \land \Ob ( \psi  /  \theta ), \NEC (  \psi  \rarr \neg \varphi ) \seq \alpha, \neg ( \Ob ( \varphi /  \theta ) \land \Ob ( \psi  /  \theta ))}{ \infer*{\NEC (  \psi  \rarr \neg \varphi ) , \psi , \varphi \seq \;}{{\mathcal D}_{1}}
 &
\infer[ax.]{\NEC (  \psi  \rarr \neg \varphi ) , \theta \seq \theta}{} &
\infer[ax.]{\NEC (  \psi  \rarr \neg \varphi ) , \theta \seq \theta}{} }}  
  \]
where the double line denotes multiple applications of the propositional rules and the derivation ${\mathcal D}_{1}$ is
\[\infer[\T]{  \NEC (  \psi  \rarr \neg \varphi ) , \psi , \varphi \seq \;}{\infer[\rarr \seq]{\psi  \rarr \neg \varphi  ,\NEC (  \psi  \rarr \neg \varphi ) , \psi , \varphi \seq \;}{ \infer[ax.]{\psi  \rarr \neg \varphi  ,\NEC (  \psi  \rarr \neg \varphi ) , \psi , \varphi \seq \psi}{} & \infer[\neg \seq]{\neg \varphi   , \psi  \rarr \neg \varphi  ,\NEC (  \psi  \rarr \neg \varphi ) , \psi , \varphi \seq \;}{\infer[ax.]{\neg \varphi   , \psi  \rarr \neg \varphi  ,\NEC (  \psi  \rarr \neg \varphi ) , \psi , \varphi \seq \varphi}{} }}}\]\qed
\end{proof}

\begin{corollary}
The logic $\bMDL$ is consistent, i.e., $\bot \not\in \bMDL$.\qed
\end{corollary}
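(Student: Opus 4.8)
The plan is to reduce the claim $\bot \notin \bMDL$ to a purely syntactic inspection of the cut-free calculus $\GDL$. First I would instantiate the soundness and completeness result (Theorem~\ref{thm:strong-compl}) with the empty set of sequents $\mathcal{S} = \emptyset$ and the sequent $\seq \bot$. Since $\Land \emptyset$ is $\top$ and $\Lor \{\bot\}$ is $\bot$, the right-hand side of the equivalence becomes $\deriv_\bMDL \top \rarr \bot$, which holds exactly when $\bot \in \bMDL$. Hence $\bot \in \bMDL$ if and only if the sequent $\seq \bot$ is derivable from no assumptions in $\GDL\Cut$.

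Next I would remove the cuts and the structural rules. By Theorem~\ref{thm:cut-elim} the rule $\Cut$ is admissible in $\GDL\Con\W$, and by Lemma~\ref{lem:adm-con} the rules $\Con$ and $\W$ are in turn admissible in $\GDL$. Chaining these admissibility statements, every sequent derivable in $\GDL\Cut$ is already derivable in the pure cut-free calculus $\GDL$. It therefore suffices to show that $\seq \bot$ has no derivation in $\GDL$.

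Here the decisive observation is the subformula property of the cut-free system: in any $\GDL$-derivation of $\seq \bot$ every formula occurring is a subformula of $\bot$, hence equal to $\bot$. I would then argue that no rule of $\GDL$ can have $\seq \bot$ as its conclusion. Indeed, $\seq \bot$ is not an initial sequent (its antecedent is empty, so no formula occurs on both sides); it is not the conclusion of the zero-premiss left rule for $\bot$ (which requires $\bot$ in the antecedent); no propositional $\G3$ rule applies, since the right rules introduce a compound formula in the succedent while $\bot$ is a constant, and the left rules together with $\T$ act on the empty antecedent; and none of the modal rules $\Mon, \4, \D_1, \D_2$ applies, as each requires a modal principal formula, which cannot equal $\bot$. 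Since no rule concludes $\seq \bot$, it is underivable, and therefore $\bot \notin \bMDL$.

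The only genuinely delicate point is the bookkeeping in the reduction from $\GDL\Cut$ to the cut-free, structural-rule-free system $\GDL$: one must check that composing Theorem~\ref{thm:cut-elim} with Lemma~\ref{lem:adm-con} really does transfer derivability to a calculus enjoying the subformula property, so that the final rule inspection is legitimate. Everything after that reduction is routine.
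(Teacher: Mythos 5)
Your proposal is correct and follows essentially the same route as the paper's own proof, which simply cites Theorem~\ref{thm:strong-compl} (instantiated with $\mathcal{S}=\emptyset$) together with the subformula property of the cut-free rules of $\GDL$. The only difference is that you spell out what the paper leaves implicit: the reduction from $\GDL\Cut$ to $\GDL$ via Theorem~\ref{thm:cut-elim} and Lemma~\ref{lem:adm-con}, and the final inspection showing no rule of $\GDL$ can conclude $\seq\bot$.
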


\begin{proof}
Follows by Thm. \ref{thm:strong-compl}.1 and the fact that the rules of $\GDL$ satisfy the subformula property.\qed
\end{proof}

\begin{definition}
  $\bMDL$ enjoys \emph{outer consistency} with respect to 
  $\mathcal{A}$ if
  $\mathcal{A} \not \deriv_{\bMDL} \bot$
\end{definition}

By Thm.~\ref{thm:strong-compl} this 
condition 
is equivalent to $\{ \;\seq \varphi \mid \varphi \in
\mathcal{A}\} \not \deriv_{\GDL\Cut} \;\seq \bot$.
We now show that $\bMDL$ allows us to consistently formalise the seemingly conflicting
statements 
of the \'Syena sacrifice. 
The proof uses the 
proof search procedure given in Algorithm~\ref{alg:proof-search}
and 
the following version of the deduction theorem (see Section \ref{sec:logic-at-work} for a semantic proof).

\begin{theorem}\label{thm:deduction}
For every sequent $\Gamma \seq \Delta$ and set $\mathcal{A}$ of formulae the following are equivalent (writing $\Box {\mathcal A}$ for $\{ \Box\varphi \mid \varphi \in \mathcal{A} \}$ taken as a multiset):
  \begin{enumerate}
  \item $\{ \;\seq \varphi \mid \varphi \in \mathcal{A}\} \deriv_{\GDL\Cut} \Gamma \seq \Delta$\label{item:deriv1}
  \item $\{ \;\seq \Box \varphi \mid \varphi \in \mathcal{A}\} \deriv_{\GDL\Cut} \Gamma \seq \Delta$\label{item:deriv2}
  \item $\deriv_\GDL \Box \mathcal{A},\Gamma \seq \Delta$.\label{item:deriv3}
  \end{enumerate}
\end{theorem}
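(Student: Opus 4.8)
The plan is to prove the cycle of implications $\ref{item:deriv1} \Rightarrow \ref{item:deriv2} \Rightarrow \ref{item:deriv3} \Rightarrow \ref{item:deriv1}$, exploiting at each step the $\Sf$-behaviour of $\NEC$ recorded in the rules $\T$ and $\4$. Since every derivation is finite it mentions only finitely many of the assumptions, so there is no harm in treating $\Box\mathcal{A}$ as a finite multiset throughout.

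For $\ref{item:deriv1} \Rightarrow \ref{item:deriv2}$ I would argue locally at the leaves. The sequent $\Box\varphi \seq \varphi$ is derivable in $\GDL$ by a single application of $\T$ above the initial sequent $\Box\varphi,\varphi \seq \varphi$. Hence, given a $\GDL\Cut$ derivation of $\Gamma \seq \Delta$ from the leaves $\{\;\seq \varphi \mid \varphi \in \mathcal{A}\}$, I replace each such leaf by the $\Cut$ of the assumption leaf $\;\seq \Box\varphi$ against $\Box\varphi \seq \varphi$, which yields $\;\seq \varphi$. Performing this substitution at every leaf turns the derivation into a $\GDL\Cut$ derivation from $\{\;\seq \Box\varphi \mid \varphi \in \mathcal{A}\}$, as required.

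The implication $\ref{item:deriv2} \Rightarrow \ref{item:deriv3}$ is the heart of the argument. Starting from a $\GDL\Cut$ derivation of $\Gamma \seq \Delta$ from the leaves $\{\;\seq \Box\varphi \mid \varphi \in \mathcal{A}\}$, I add the whole multiset $\Box\mathcal{A}$ to the antecedent of every sequent in the derivation and check by induction on its depth that the result is again a correct derivation, now of $\Box\mathcal{A},\Gamma \seq \Delta$ and without assumption leaves. At an assumption leaf $\;\seq \Box\varphi$ the modified sequent $\Box\mathcal{A} \seq \Box\varphi$ is initial since $\Box\varphi \in \Box\mathcal{A}$; at a genuine initial sequent adding $\Box\mathcal{A}$ to the left again gives an initial sequent. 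The only delicate rule cases are the transitional rules $\4,\Mon,\D_1,\D_2$: their premisses are built from $\Gamma^\Box$, and because $\Box\mathcal{A}$ consists exclusively of boxed formulae we have $(\Box\mathcal{A},\Gamma)^\Box = \Box\mathcal{A},\Gamma^\Box$, so the added context propagates into the premisses exactly as the induction hypothesis supplies them. The static rules and $\Cut$ simply carry the extra context through (merging the two copies of $\Box\mathcal{A}$ produced by $\Cut$ via Contraction). This construction lives in $\GDL\Cut\Con\W$; eliminating $\Cut$ by Thm.~\ref{thm:cut-elim} (which gives cut admissibility in $\GDL\Con\W$) and then removing $\Con$ and $\W$ by Lemma~\ref{lem:adm-con} produces the cut-free $\GDL$ derivation of $\Box\mathcal{A},\Gamma \seq \Delta$ demanded by $\ref{item:deriv3}$.

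Finally, for $\ref{item:deriv3} \Rightarrow \ref{item:deriv1}$ I work at the leaves in the opposite direction: from the leaf $\;\seq \varphi$ one application of $\4$ with empty context, using $\emptyset^\Box = \emptyset$, derives $\;\seq \Box\varphi$. Cutting these sequents successively, for each $\varphi \in \mathcal{A}$, against the cut-free derivation of $\Box\mathcal{A},\Gamma \seq \Delta$ removes $\Box\mathcal{A}$ from the antecedent and yields a $\GDL\Cut$ derivation of $\Gamma \seq \Delta$ from $\{\;\seq \varphi \mid \varphi \in \mathcal{A}\}$. I expect the main obstacle to be the verification, in $\ref{item:deriv2} \Rightarrow \ref{item:deriv3}$, that inserting $\Box\mathcal{A}$ uniformly is compatible with the transitional rules; this is exactly where the persistence of boxed formulae under $(\cdot)^\Box$ — the $\Sf$ character of $\NEC$ — is indispensable, and it is also what forces the detour through cut elimination rather than allowing a direct cut-free transformation.
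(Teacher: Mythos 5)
Your proof is correct and takes essentially the same route as the paper's: (1)$\Rightarrow$(2) by cutting each assumption $\;\seq\Box\varphi$ against the $\T$-derivable sequent $\Box\varphi\seq\varphi$, (2)$\Rightarrow$(3) by adding $\Box\mathcal{A}$ to every antecedent---sound precisely because every rule of $\GDL$ propagates boxed antecedent formulae into its premisses---followed by cut elimination and admissibility of the structural rules, and (3)$\Rightarrow$(1) via $\4$ and $\Cut$; indeed you make explicit some details the paper glosses over, such as contracting the duplicated copies of $\Box\mathcal{A}$ created by the multiplicative $\Cut$ rule. One small nitpick: the sequent $\Box\mathcal{A}\seq\Box\varphi$ is, strictly speaking, \emph{derivable} (via $\4$, $\T$, and generalized initial sequents) rather than literally initial if the calculus's initial sequents are restricted to atoms---the paper itself only calls it ``derivable''---but this does not affect the structure of your argument.
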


\begin{proof}
  $\ref{item:deriv1} \to \ref{item:deriv2}$: 
Easily follows by using the rules $\T$ and $\Cut$.

  $\ref{item:deriv2} \to \ref{item:deriv3}$: Since every rule in
  $\GDL$ 
  copies all boxed formulae in the
  antecedent 
  from
  conclusion to 
  premisses, the result of adding
  the formulae $\{ \Box \varphi \mid \varphi \in \mathcal{A} \}$ to the
  antecedents 
  of every 
  sequent occurring in the derivation of $\Gamma
  \seq \Delta$ from $\{ \Box \varphi \mid \varphi \in \mathcal{A}\}$ is still a
  derivation. As this turns every assumption 
$\;\seq \Box
  \varphi$ 
  into the
  derivable sequent $\Box \mathcal{A} \seq \Box \varphi$, the result is a
  derivation without assumptions. Statement \ref{item:deriv3} now
  follows using Cut Elimination (Thm.~\ref{thm:cut-elim}).

  $\ref{item:deriv3} \to \ref{item:deriv1}$: 
Easily follows by using the rules $\4$ and $\Cut$.\qed
\end{proof}

Thus in order to check whether $\bMDL$ enjoys outer consistency w.r.t.
a set $\mathcal{A}$ of formulae it is sufficient to check that the
sequent $\Box \mathcal{A} \seq \bot$ is not derivable in $\GDL$.
Before we 
formalise the \'Syena sacrifice, 
let us remark that while
the 
operator $\Ob(\cdot / \cdot )$ 
only
captures \emph{conditional obligations}, we would also like to reason
about \emph{unconditional obligations}, i.e., obligations which always
have to be fulfilled. We formalise such 
obligations in the standard way 
by 
$\Ob(\cdot / \top)$. 
A formula $\Ob(\varphi / \top)$ then can be read
as ``it is obligatory that $\varphi$ provided \emph{anything} is the case'', and
thus 
models an unconditional obligation. 
A formalisation of the problematic
situation in the \'Syena example (sentences A. and B. in Sec.~\ref{sec:extraction}) 
then is:
\begin{enumerate}
\item $\Ob (\neg \harm / \top)$ for ``One should not perform violence on any living being"\label{not_harm}
\item $\Ob(\syena/\desharm)$ for ``If you desire to harm your enemy you should perform the \'Syena''
   \label{syena}
\item $\harmenemy \to
\harm$ for ``harming the enemy entails harming a living
  being''
\item $\syena \to
\harmenemy$ for ``performing the \'Syena  entails harming the enemy''.\label{item:syenatoharm}
\end{enumerate}
with the variables $\harm$ for ``performing violence on any living
being'', $\syena$ for ``performing the \'Syena sacrifice'',
$\harmenemy$ for ``harming your enemy'', and $\desharm$ for ``desiring to harm your enemy''.

\begin{theorem}
  \label{thm:syena}
  $\bMDL$ enjoys outer consistency w.r.t. 
  the
  \'Syena sacrifice, i.e.:
  \[
  \bigl\{\; \harmenemy \to \harm,\; \syena\to\harmenemy,\; \Obl(\neg
  \harm / \top),\; \Obl(\syena / \desharm) \;\bigr\} \not\deriv_\bMDL \bot\;.
  \]
\end{theorem}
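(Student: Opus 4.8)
The plan is to reduce the claim to the non-derivability of a single sequent in the cut-free calculus, and then refute that sequent by an exhaustive (but finite) inspection of the proof search procedure. Writing $\mathcal{A}$ for the four-element set in the statement, outer consistency w.r.t.\ $\mathcal{A}$ means $\mathcal{A}\not\deriv_\bMDL\bot$, which by Thm~\ref{thm:strong-compl} and the remark following it is equivalent to $\{\;\seq\varphi\mid\varphi\in\mathcal{A}\}\not\deriv_{\GDL\Cut}\;\seq\bot$. Applying the deduction theorem (Thm~\ref{thm:deduction}, equivalence of \ref{item:deriv1} and \ref{item:deriv3}) this becomes the purely cut-free statement $\not\deriv_\GDL\Box\mathcal{A}\seq\bot$, where $\Box\mathcal{A}$ collects $\Box(\harmenemy\to\harm)$, $\Box(\syena\to\harmenemy)$, $\Box\Obl(\neg\harm/\top)$ and $\Box\Obl(\syena/\desharm)$. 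By soundness and completeness of the proof search procedure (Prop~\ref{prop:correctness}) together with its termination (Lemma~\ref{lem:termination}), it then suffices to show that Algorithm~\ref{alg:proof-search} rejects the input $[\Box\mathcal{A}\seq\bot]$, i.e.\ that every backwards rule application leaves at least one underivable premiss.

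First I would saturate $\Box\mathcal{A}\seq\bot$ under the one-premiss static rules: the rule $\T$ unpacks the four boxes, adding $\harmenemy\to\harm$, $\syena\to\harmenemy$, $\Obl(\neg\harm/\top)$ and $\Obl(\syena/\desharm)$ to the antecedent while the succedent stays $\bot$, so the sequent is not initial. The only rules then applicable are the two-premiss static rule $\rarr\seq$ on either implication and the transitional rules $\D_1$ (on either obligation) and $\D_2$ (on the pair of obligations); note that $\Mon$ and $\4$ are inapplicable since the succedent contains neither an $\Obl$- nor a $\Box$-formula. I would then argue that each of these applications is rejected.

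The conceptual heart is the $\D_2$ case. Matching the two obligations against $\Obl(\varphi/\psi),\Obl(\theta/\chi)$ in either order, the three premisses always include $\Gamma^\Box,\top\seq\desharm$ (alongside $\Gamma^\Box,\desharm\seq\top$, which is trivially closed, and a premiss with empty succedent). Since $\desharm$ is a propositional variable occurring in $\Box\mathcal{A}$ only as the \emph{condition} inside $\Box\Obl(\syena/\desharm)$, it can never be introduced on the right, so $\Gamma^\Box,\top\seq\desharm$ is underivable and $\D_2$ fails. This is precisely the formal counterpart of the observation that the two obligations carry the non-equivalent conditions $\top$ and $\desharm$: the conflict-detecting rule $\D_2$ simply cannot fire. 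For the remaining applications I would trace the (finite, loop-checked) search tree: $\D_1$ on $\Obl(\neg\harm/\top)$ and on $\Obl(\syena/\desharm)$ yields the empty-succedent premisses $\Gamma^\Box,\neg\harm\seq$ and $\Gamma^\Box,\syena\seq$, and unfolding these with $\T$, $\neg\seq$ and $\rarr\seq$ reduces closure to deriving one of the atoms $\syena,\harmenemy,\harm$ on the right from the boxed context alone, which never succeeds; likewise the two $\rarr\seq$ applications push the goal into branches requiring such an unavailable atom.

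I expect the main obstacle to be purely bookkeeping: verifying that every branch of the loop-checked search tree genuinely fails to close, given that the boxed obligations are repeatedly copied into premisses while the side formulae are discarded by the $(\cdot)^\Box$ operation (so that, e.g., a $\D_1$ step after reaching $\harm\seq$ regenerates a sequent already of the failing form $\Box\mathcal{A},\neg\harm\seq$). Termination (Lemma~\ref{lem:termination}) guarantees the analysis is finite, and the history-based loop checking of Algorithm~\ref{alg:proof-search} excludes spurious non-closing cycles, so the argument is a finite case distinction rather than anything deep; the single load-bearing insight remains the non-equivalence of the conditions $\top$ and $\desharm$.
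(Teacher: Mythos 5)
Your proposal is correct and takes essentially the same route as the paper: it reduces the claim, via Thm.~\ref{thm:strong-compl} and the deduction theorem (Thm.~\ref{thm:deduction}), to the non-derivability of $\Box(\harmenemy\to\harm),\Box(\syena\to\harmenemy),\Box\Obl(\neg\harm/\top),\Box\Obl(\syena/\desharm)\seq\bot$ in $\GDL$, and then establishes this by exhaustive proof search with Algorithm~\ref{alg:proof-search}. The only difference is that the paper dismisses the search itself as a tedious routine check, whereas you correctly carry out its key step, isolating the underivable $\D_2$-premiss $\Gamma^\Box,\top\seq\desharm$ arising from the non-equivalent conditions $\top$ and $\desharm$.
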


\begin{proof}
  By 
  Thm.~\ref{thm:strong-compl} and Thm.~\ref{thm:deduction}
  it is sufficient to show 
  that the sequent
  \[
  \label{form:syena-sequent}
  \Box (\harmenemy \to \harm), \Box (\syena\to\harmenemy), \Box\Obl(\neg
  \harm / \top), \Box\Obl(\syena / \desharm) \seq \bot
  \]
  is not derivable in $\GDL$. This is 
  done in the standard way by
  (a bit tediously) performing an exhaustive proof search following the 
  procedure
  in Algorithm~\ref{alg:proof-search}.\qed
\end{proof}

\section{Semantics of $\bMDL$}
\label{sec:semantics}

The semantics for $\bMDL$ is build on the standard semantics for modal logic $\Sf$, i.e.,
Kripke-frames with transitive and reflexive accessibility
relation \cite{Blackburn:2001fk}. The
additional modality $\Obl$ is captured using 
\emph{neighbourhood semantics}~\cite{Chellas:1980fk}, which 
we
modify 
to take into account only accessible worlds.
Intuitively, the neighbourhood map singles out a set of deontically
acceptable sets of accessible worlds for certain possible situations,
i.e., sets of accessible worlds.
As usual, if $R
\subseteq W\times W$ is a
relation and $w \in W$, we write $R[w]$ for $\{ v \in W \mid
wRv\}$. Also, for a set $X$ we write $X^c$ for the complement of $X$
(relative to an 
implicitly given set).

\begin{definition}\label{def:m-frame}
  A \emph{{\Mimamsa}-frame} (or briefly:
  \emph{m-frame}) is a triple
  $(W,R,\eta)$ consisting of a non-empty set $W$ of \emph{worlds} or
  \emph{states}, an \emph{accessibility relation} $R \subseteq W
  \times W$ and a map $\eta: W \to \Pow( \Pow (W) \times \Pow(W))$
  such that:
  \begin{enumerate}
  \item $R$ is transitive and reflexive;\label{item:one}
  \item if $(X,Y) \in \eta(w)$, then $X \subseteq R[w]$ and $Y
    \subseteq R[w]$;\label{item:two}
  \item if $(X,Z) \in \eta(w)$ and $X \subseteq Y\subseteq R[w]$, then also
    $(Y,Z)\in \eta(w)$;\label{item:three}
  \item $(\emptyset,X) \notin \eta(w)$;\label{item:four}
  \item if $(X,Y) \in \eta(w)$, then $(X^c\cap R[w],Y)\notin \eta(w)$.\label{item:five}
  \end{enumerate}
  A \emph{{\Mimamsa}-model} (or \emph{m-model}) is a m-frame 
  with a \emph{valuation} $\sigma: W \to \Pow(\Var)$.
\end{definition}

Intuitively, Condition~\ref{item:one} in Def.~\ref{def:m-frame} 
corresponds to 
axioms $(\4)$
and $(\T)$ of $\Sf$, Condition~\ref{item:two} ensures that only accessible
worlds influence the truth of a formula $\Obl(\varphi / \psi)$ and comes
from 
the rules $(\Mon)$ and $(\Cg)$,
Condition~\ref{item:three} corresponds to the rule
$(\Mon)$, while 
Conditions~\ref{item:four} resp.~\ref{item:five} correspond to 
$(\D_1)$ resp.\ 
$(\D_2)$. 

\begin{definition}[Satisfaction, truth set]
  Let $\mathfrak{M} = (W,R,\eta),\sigma$ be a m-model. The \emph{truth
  set} $\trs{\varphi}_{\mathfrak{M}}$ of a formula $\varphi$ in $\mathfrak{M}$ is defined recursively by
\begin{enumerate}
\item $\trs{p}_\mathfrak{M} \defs \{ w \in W \mid p \in \eta(w)\}$
\item $\trs{\Box \varphi}_\mathfrak{M} \defs \{ w \in M \mid R[w] \subseteq \trs{\varphi}_\mathfrak{M}\}$
\item $\trs{\Obl(\varphi / \psi)}_\mathfrak{M} \defs \{ w \in W \mid
  (\trs{\varphi}_\mathfrak{M} \cap R[w], \trs{\psi}_\mathfrak{M} \cap
  R[w]) \in \eta(w) \}$\label{item:truth-cond-obl}
\end{enumerate}
and 
the standard clauses for the boolean connectives. We
omit the subscript $\mathfrak{M}$ if the m-model is clear from
the context, and we 
write $\mathfrak{M},w \forces \varphi$ for $w
\in \trs{\varphi}_\mathfrak{M}$.
A formula $\varphi$ is
\emph{valid in} a m-model $\mathfrak{M}$ if for all worlds $w$ of
$\mathfrak{M}$ we have $\mathfrak{M},w \forces \varphi$.
\end{definition}

Note that in clause \ref{item:truth-cond-obl}
we
slightly deviate from the standard treatment in that we restrict the
attention to worlds accessible from the current world.

\begin{lemma}\label{lem:soundness-preservation}
  For all rules of $\GDL$ we have: if the interpretations of its
  premisses are valid in all m-models, then so is the interpretation
  of its conclusion.
\end{lemma}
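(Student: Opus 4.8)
The plan is to argue by a case analysis over the rules of $\GDL$, showing that each one \emph{preserves validity}: assuming the interpretations of all its premisses hold at every world of every m-model, I establish the same for its conclusion. Recall that a sequent $\Sigma \seq \Pi$ is interpreted as $\Land \Sigma \rarr \Lor \Pi$. For the propositional $\G3$-rules and the $\bot$-rule nothing modal happens: validity is preserved world-by-world, and the standard verification carries over verbatim since one may argue pointwise at each $w$. So the real content lies in the five modal rules, and for each I would exploit exactly the frame condition flagged after Def.~\ref{def:m-frame}.

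The common mechanism for the modal cases is the following. Fixing an arbitrary m-model $\mathfrak{M} = (W,R,\eta),\sigma$ and a world $w$ at which the antecedent of the conclusion holds, I observe that $w \forces \Land \Gamma$ forces $R[w] \subseteq \trs{\Land \Gamma^\Box}$, so that every accessible $v \in R[w]$ satisfies $\Land \Gamma^\Box$; I may therefore instantiate the globally valid premisses at each such $v$. For $\4$ this, together with \emph{transitivity} (Cond.~\ref{item:one}), yields $v \forces \varphi$ for every $v \in R[w]$ and hence $w \forces \NEC \varphi$; indeed transitivity gives $R[v] \subseteq R[w]$, so each boxed hypothesis in $\Gamma^\Box$ is inherited at $v$. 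For $\T$, \emph{reflexivity} (Cond.~\ref{item:one}) gives $w \in R[w]$, so $w \forces \NEC \varphi$ already yields $w \forces \varphi$ and the extra premiss formula is available. For $\Mon$, instantiating the three premisses at the worlds of $R[w]$ turns them into the inclusion $\trs{\varphi}\cap R[w] \subseteq \trs{\theta}\cap R[w]$ and the equality $\trs{\psi}\cap R[w] = \trs{\chi}\cap R[w]$; feeding these into the truth condition for $\Obl$ and applying \emph{monotonicity in the first argument} (Cond.~\ref{item:three}) transports $(\trs{\varphi}\cap R[w], \trs{\psi}\cap R[w]) \in \eta(w)$ to $(\trs{\theta}\cap R[w], \trs{\chi}\cap R[w]) \in \eta(w)$, i.e.\ $w \forces \Obl(\theta/\chi)$.

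The two remaining cases use the seriality-style conditions and are where I expect the only genuine friction. For $\D_1$, the premiss $\Gamma^{\Box},\varphi \seq$ instantiated along $R[w]$ forces $\trs{\varphi}\cap R[w] = \emptyset$; but $w \forces \Obl(\varphi/\psi)$ would put $(\emptyset, \trs{\psi}\cap R[w]) \in \eta(w)$, contradicting Cond.~\ref{item:four}, so the antecedent is unsatisfiable and the conclusion holds vacuously. For $\D_2$ the first premiss gives $\trs{\varphi}\cap\trs{\theta}\cap R[w] = \emptyset$, i.e.\ $\trs{\theta}\cap R[w] \subseteq (\trs{\varphi}\cap R[w])^c \cap R[w]$, while the other two give $\trs{\psi}\cap R[w] = \trs{\chi}\cap R[w]$; assuming both $\Obl(\varphi/\psi)$ and $\Obl(\theta/\chi)$ true at $w$, monotonicity (Cond.~\ref{item:three}) lifts $(\trs{\theta}\cap R[w], \trs{\psi}\cap R[w]) \in \eta(w)$ to $((\trs{\varphi}\cap R[w])^c\cap R[w], \trs{\psi}\cap R[w]) \in \eta(w)$, directly contradicting Cond.~\ref{item:five}. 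The main obstacle is purely bookkeeping: making sure the complement in Cond.~\ref{item:five} is read relative to $R[w]$ (so that $(\trs{\varphi}\cap R[w])^c\cap R[w] = \trs{\varphi}^c \cap R[w]$) and that the restriction of the $\Obl$-truth-condition to accessible worlds is respected throughout, which is precisely what Cond.~\ref{item:two} guarantees.
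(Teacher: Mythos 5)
Your proof is correct and takes essentially the same route as the paper's: a case analysis over the rules of $\GDL$ exploiting exactly the correspondence between the conditions of Def.~\ref{def:m-frame} and the rules, with transitivity used to inherit $\Gamma^\Box$ at accessible worlds, Cond.~\ref{item:four} for $\D_1$, and the combination of Cond.~\ref{item:three} with Cond.~\ref{item:five} for $\D_2$. The only difference is presentational: you argue directly (valid premisses imply valid conclusion, so the conclusion's antecedent is unsatisfiable in the $\D$-cases), whereas the paper argues contrapositively (a world satisfying the negated conclusion yields a world satisfying the negation of some premiss), which is the same argument read in reverse.
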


\begin{proof}
  We show that if the negation of the interpretation of the conclusion 
  is
  satisfiable in a m-model, then so is the negation of the
  interpretation of (at least) one
  of the premisses.
  For $\4,\T$ and the propositional rules 
  this is standard. 

  For the modal rules we only show the case of $\D_2$, the other cases being similar.
  Assume that for the m-model $\mathfrak{M} =
  (W,R,\eta),\sigma$ the negation of the conclusion is satisfied in $w
  \in W$, i.e., we have
  $
  \mathfrak{M},\sigma \forces \Land  \Gamma \land \Obl(\varphi / \psi)
  \land \Obl (\theta / \chi)\;.
  $
  Then we have $(\trs{\varphi} \cap R[w],\trs{\psi} \cap R[w]) \in \eta(w)$ and
  $(\trs{\theta} \cap R[w], \trs{\chi} \cap R[w]) \in\eta(w)$. By
  Cond.~\ref{item:five} in Def.~\ref{def:m-frame} 
  we know
  that $(\trs{\varphi}^c  \cap R[w],\trs{\psi} \cap R[w]) \not\in \eta(w)$, hence
  $\trs{\theta} \cap R[w] \neq \trs{\varphi}^c  \cap R[w]$ or $\trs{\psi} \cap R[w]
  \neq \trs{\chi} \cap R[w]$. If the latter does not hold, using this
  and 
  Cond.~\ref{item:three} we have $\trs{\varphi}^c\cap
  R[w]\subsetneq\trs{\theta}\cap R[w] $ and hence 
  we find a world $v \in
  \trs{\varphi} \cap \trs{\theta} \cap R[w]$. Then with transitivity we obtain
  $
  \mathfrak{M},\sigma, v \forces \Land  \Gamma^\Box \land \varphi \land \theta
  $,
  and thus the negation of the first premiss of the rule is
  satisfiable. Otherwise
  we have $\trs{\psi}\cap \trs{\chi}^c \cap R[w] \neq \emptyset$ or $\trs{\chi}\cap \trs{\psi}^c \cap R[w] \neq \emptyset$ 
  and again using transitivity we satisfy the
  negation of the second or the third premiss of the rule. 
  \qed
\end{proof}

\begin{corollary}[Soundness of $\GDL$]
  For every sequent $\Gamma \seq
  \Delta$ we have: if $\deriv_\GDL \Gamma \seq \Delta$, then $\Land \Gamma \to
  \Lor \Delta$ is valid in all m-models.
\end{corollary}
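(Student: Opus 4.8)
The plan is to argue by induction on the depth (height) of the $\GDL$-derivation of $\Gamma \seq \Delta$, using Lemma~\ref{lem:soundness-preservation} as the engine for the inductive step. The claim to be established is that whenever $\deriv_\GDL \Gamma \seq \Delta$ holds, the interpretation $\Land \Gamma \to \Lor \Delta$ is true at every world of every m-model.

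For the base case I would treat the leaves of a $\GDL$-derivation, namely the initial sequents and the zero-premiss left rule for $\bot$. An initial sequent has the form $\Gamma', p \seq p, \Delta'$, whose interpretation $\Land \Gamma' \land p \to p \lor \Lor \Delta'$ is a propositional tautology and hence true at every world of every m-model by the clauses for the boolean connectives. For the $\bot$-rule the conclusion $\Gamma', \bot \seq \Delta'$ carries $\bot$ among its antecedents, so the truth set $\trs{\Land \Gamma' \land \bot}$ is empty and the interpretation is vacuously valid. Since these are the only kinds of leaf admitted in $\GDL$-derivations, the base case is immediate.

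For the inductive step, suppose the derivation ends with an application of some rule $r$ of $\GDL$ with conclusion $\Gamma \seq \Delta$ and premisses $P_1, \dots, P_n$, each possessing a strictly shorter derivation. By the induction hypothesis the interpretation $\Land \Sigma_i \to \Lor \Pi_i$ of each premiss $P_i = \Sigma_i \seq \Pi_i$ is valid in all m-models. Lemma~\ref{lem:soundness-preservation} states precisely that every rule of $\GDL$ preserves validity in this sense, so applying it to $r$ yields that $\Land \Gamma \to \Lor \Delta$ is valid in all m-models, closing the induction.

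I expect no genuine obstacle to remain: the semantic heart of the argument, in particular the delicate case analysis for the deontic rules $\Mon, \D_1, \D_2$ that invokes Conditions~\ref{item:three}--\ref{item:five} of Def.~\ref{def:m-frame}, has already been discharged in Lemma~\ref{lem:soundness-preservation}, so the corollary reduces to a routine derivation-height induction. The only point requiring a moment's care is confirming that the base case exhausts every leaf permitted in a cut-free $\GDL$-derivation (the initial sequents and the $\bot$-rule), after which the inductive step is a direct appeal to the preservation lemma.
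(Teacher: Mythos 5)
Your proposal is correct and follows essentially the same route as the paper: the paper's proof is exactly an induction on the depth of the derivation using Lemma~\ref{lem:soundness-preservation} for the inductive step (the paper merely leaves the base case for initial sequents and the $\bot$-rule implicit, which you spell out correctly).
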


\begin{proof}
  By induction on the depth of the derivation, using Lem.~\ref{lem:soundness-preservation}.\qed
\end{proof}

For completeness we show how to construct a countermodel for a given sequent from a failed proof search for 
it.
For this, fix $\Gamma \seq \Delta$ to be a sequent 
not derivable in $\GDL$. We 
build 
a m-model
$\mathfrak{M}_{\Gamma \seq \Delta} = (W,R,\eta),\sigma$ from a
rejecting run of Alg.~\ref{alg:proof-search} 
on input
$[\Gamma \seq \Delta]$, such
that $\Land \Gamma \land \Land \neg \Delta$ is satisfied
in a world 
of
$\mathfrak{M}_{\Gamma \seq \Delta}$. For this, 
take the set $W$ of worlds
to be the set of all histories occurring in the run of the
procedure. 
To define the accessibility relation 
we first
construct the intermediate relation $R'$ by setting
$\mathcal{H} R' \mathcal{H}'$ iff (at least) one of the following holds:
\begin{enumerate}
\item $\mathcal{H} \cless
                            \mathcal{H}'$; or
\item $\mathcal{H}' \cless \mathcal{H}$ and there is a transitional rule application with conclusion
$\last{\mathcal{H}}$ and a premiss $\Sigma \seq \Pi$ of this rule
application such that $\Sigma
\subseteq \lastl{\mathcal{H}'}$ and $\Pi \subseteq
\lastr{\mathcal{H}'}$.\label{item:loops}
\end{enumerate}
Intuitively, in \ref{item:loops}.\ 
we add the loops which have been detected by the 
procedure.
The relation $R$ then is defined as the reflexive and
transitive closure of $R'$.
To define the function $\eta$ we first introduce a syntactic version of the truth set
notation: 
\[
\sts{\varphi}_W \defs \left\{ \mathcal{H} \in W \mid \varphi \in \lastl{\mathcal{H}} \right\}
\]
Now we define $\eta : W \to \Pow(\Pow (W) \times
\Pow(W))$
by setting for every history $\mathcal{H}$ in $W$:
\[
\eta(\mathcal{H}) \defs \left\{ (X,Y) \in \Pow(R[\mathcal{H}])^2 \mid \begin{array}{ll}\text{ for
    some formula }\Obl(\varphi/\psi) \in \lastl{\mathcal{H}}:\\ \sts{\varphi}_W\cap R[\mathcal{H}] \subseteq X\text{ and
                                                           }\sts{\psi}_W\cap
                                                           R[\mathcal{H}]
                                                           = Y
                                                         \end{array}
                                                       \right\}\;.
\]
Finally, we define the valuation $\sigma$ 
by
setting for every 
variable $p \in \Var$:
\[
\sigma(p) \defs \sts{p}_W\;.
\]
Let us write $\mathfrak{M}_{\Gamma \seq \Delta}$ for the resulting
structure $(W,R,\eta)$. Then we have:

\begin{lemma}\label{lem:model}
  The structure $\mathfrak{M}_{\Gamma\seq\Delta},\sigma$ is a m-model.
\end{lemma}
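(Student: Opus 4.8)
The plan is to verify that the structure $\mathfrak{M}_{\Gamma\seq\Delta} = (W,R,\eta),\sigma$ satisfies each of the five conditions in Def.~\ref{def:m-frame}, since the valuation $\sigma$ is well-defined by construction. First I would check Condition~\ref{item:one}: $R$ is transitive and reflexive by construction, since it is defined as the reflexive and transitive closure of the intermediate relation $R'$. This is immediate and requires no work. Conditions~\ref{item:two} and~\ref{item:three} are also essentially by definition: $\eta(\mathcal{H})$ is defined as a subset of $\Pow(R[\mathcal{H}])^2$, so every pair $(X,Y) \in \eta(\mathcal{H})$ automatically has $X,Y \subseteq R[\mathcal{H}]$ (Condition~\ref{item:two}); and the upward-closure in the first component (Condition~\ref{item:three}) follows because the defining clause only requires $\sts{\varphi}_W \cap R[\mathcal{H}] \subseteq X$, which is preserved when $X$ is enlarged to any $Y'$ with $X \subseteq Y' \subseteq R[\mathcal{H}]$.

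The substantive work lies in Conditions~\ref{item:four} and~\ref{item:five}, which encode the deontic constraints coming from the rules $\D_1$ and $\D_2$. For Condition~\ref{item:four}, I would argue by contradiction: suppose $(\emptyset, X) \in \eta(\mathcal{H})$. Then by definition there is some $\Obl(\varphi/\psi) \in \lastl{\mathcal{H}}$ with $\sts{\varphi}_W \cap R[\mathcal{H}] \subseteq \emptyset$, i.e.\ $\sts{\varphi}_W \cap R[\mathcal{H}] = \emptyset$. The key step is to extract from this emptiness a derivation witnessing $\Gamma^\Box,\varphi \seq$ along the relevant history, so that the rule $\D_1$ would apply and the proof search would have accepted, contradicting that the run is rejecting. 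This requires relating the semantic emptiness of $\sts{\varphi}_W \cap R[\mathcal{H}]$ back to the combinatorics of the accepting/rejecting behaviour of Algorithm~\ref{alg:proof-search}. For Condition~\ref{item:five}, I would similarly assume $(X,Y) \in \eta(\mathcal{H})$ and $(X^c \cap R[\mathcal{H}], Y) \in \eta(\mathcal{H})$ and derive a contradiction: the two memberships give formulae $\Obl(\varphi/\psi), \Obl(\theta/\chi) \in \lastl{\mathcal{H}}$ whose truth-sets witness the premisses of $\D_2$, so again the rejecting run would have had an accepting application of $\D_2$ available.

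The main obstacle I expect is precisely this bridge between the purely set-theoretic conditions on $\eta$ and the operational facts about the rejecting run of the proof search. Concretely, the difficulty is establishing a \emph{truth lemma}--style correspondence at the level of the syntactic truth sets $\sts{\varphi}_W$, namely that membership of a history in $\sts{\varphi}_W \cap R[\mathcal{H}]$ behaves compatibly with the rule applications the algorithm did or did not perform, and that the \emph{absence} of a certain configuration among the histories in $W$ is forced by the rejection. I would handle this by carefully tracking, for each transitional rule, how the loop-detection clause~\ref{item:loops} in the definition of $R'$ records exactly those premisses the procedure explored, and by invoking termination (Lem.~\ref{lem:termination}) and correctness (Prop.~\ref{prop:correctness}) to guarantee that if the relevant premisses of $\D_1$ or $\D_2$ were all derivable then $\Gamma \seq \Delta$ itself would have been accepted. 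The remaining conditions being routine, essentially all the care in this lemma is concentrated in showing that the violations of Conditions~\ref{item:four} and~\ref{item:five} cannot arise from a genuinely rejecting run.
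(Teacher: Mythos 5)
Your skeleton matches the paper's: Conditions~\ref{item:one}--\ref{item:three} are indeed immediate from the construction, the work is in Conditions~\ref{item:four} and~\ref{item:five}, and your setup (extract witnesses $\Obl(\varphi/\psi), \Obl(\theta/\chi) \in \lastl{\mathcal{H}}$ from the assumed memberships in $\eta(\mathcal{H})$, note that $\D_1$ resp.\ $\D_2$ is then backwards applicable to $\last{\mathcal{H}}$) is exactly right. The gap is in the mechanism you propose for the crux, which runs in the wrong direction and leans on steps that fail. You want to pass from the set-theoretic hypothesis (e.g.\ $\sts{\varphi}_W \cap R[\mathcal{H}] = \emptyset$) to \emph{derivability} of the corresponding premiss, and then via Prop.~\ref{prop:correctness} to acceptance of the root, contradicting the rejecting run. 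Neither step is available: extracting a derivation from emptiness of a syntactic truth set is precisely the kind of semantics-to-syntax transfer that the model construction is meant to establish, so using it here is circular; and Prop.~\ref{prop:correctness} only relates derivability to acceptance for the \emph{initial} input $[\Gamma \seq \Delta]$ --- derivability of the premisses of an application at an intermediate history $\mathcal{H}$ would give derivability of $\last{\mathcal{H}}$, but not derivability (let alone acceptance) of the root, since $\last{\mathcal{H}}$ may be only one of several premisses of the application that spawned it, and loop-checking makes acceptance of intermediate histories history-dependent. Termination (Lem.~\ref{lem:termination}) contributes nothing to this bridge.

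The paper's argument needs no derivability reasoning at all; it runs from \emph{rejection to witnesses}. Since $\mathcal{H}$ is a rejected history and the relevant transitional rule is applicable to $\last{\mathcal{H}}$, at least one premiss $\Sigma \seq \Pi$ of that application was rejected, and Alg.~\ref{alg:proof-search} rejects a premiss in exactly two ways, each of which produces a world of $W$: either the recursive call on $\mathcal{H}\plusplus[\Sigma\seq\Pi]$ is rejected (and this history is an $R$-successor of $\mathcal{H}$ by the first clause of $R'$), or the premiss is loop-detected against some $\mathcal{H}'\cless\mathcal{H}$ (and then $\mathcal{H}R\mathcal{H}'$ precisely by clause~\ref{item:loops}). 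Either way one gets $\mathcal{H}' \in W$ with $\mathcal{H}R\mathcal{H}'$, $\Sigma \subseteq \lastl{\mathcal{H}'}$, $\Pi \subseteq \lastr{\mathcal{H}'}$, and $\last{\mathcal{H}'}$ not initial. This world directly refutes the assumed membership in $\eta(\mathcal{H})$: for Cond.~\ref{item:four} it lies in $\sts{\varphi}_W \cap R[\mathcal{H}]$, contradicting emptiness; for Cond.~\ref{item:five}, in your notation, the two memberships give $\sts{\varphi}_W \cap \sts{\theta}_W \cap R[\mathcal{H}] \subseteq X \cap X^c = \emptyset$ and $\sts{\psi}_W \cap R[\mathcal{H}] = Y = \sts{\chi}_W \cap R[\mathcal{H}]$, and the witness world lands either in $\sts{\varphi}_W \cap \sts{\theta}_W \cap R[\mathcal{H}]$ (first premiss) or in $(\sts{\psi}_W \setminus \sts{\chi}_W) \cap R[\mathcal{H}]$ or $(\sts{\chi}_W \setminus \sts{\psi}_W) \cap R[\mathcal{H}]$ (second/third premiss), a contradiction in each case. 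Note the small but necessary detail missing from your sketch: for the second and third premisses one needs that $\last{\mathcal{H}'}$ is not an initial sequent, so that $\chi \in \lastr{\mathcal{H}'}$ forces $\chi \notin \lastl{\mathcal{H}'}$, i.e.\ $\mathcal{H}' \notin \sts{\chi}_W$. The missing idea, in short, is that the two rejection mechanisms of the algorithm were built into the two clauses defining $R'$ exactly so that every rejected premiss is materialised as an $R$-accessible world.
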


\begin{proof}
  By construction 
  $\sigma$ is a valuation, 
  $R$ is a transitive and reflexive relation on $W$, and 
  Conditions~\ref{item:two} and~\ref{item:three} of
  Def.~\ref{def:m-frame} hold for $\eta$.
  To see that Condition~\ref{item:five} holds, we need to show that if 
  $(X,Y) \in \eta(\mathcal{H})$ then $(X^c \cap R[\mathcal{H}], Y)
  \not\in \eta(\mathcal{H})$.
  For this 
  we show that whenever $(X,Y) \in \eta(\mathcal{H})$ and
  $(Z,W) \in \eta(\mathcal{H})$, then $Z \not=
  X^c\cap R[\mathcal{H}]$
  or $Y \neq W$. So assume we have such $(X,Y)$ and $(Z,W)$ in
  $\eta(\mathcal{H})$. 
  By construction of $\eta$ there must be
  formulae
  $\Obl(\varphi/\psi)$ and $\Obl(\theta/\chi)$ in $\lastl{\mathcal{H}}$ such that
  \begin{itemize}
  \item $\sts{\varphi}_W \cap R[\mathcal{H}] \subseteq X$ and $\sts{\psi}_W
    \cap R[\mathcal{H}] = Y$; and 
  \item $\sts{\theta}_W \cap R[\mathcal{H}] \subseteq Z$ and $\sts{\chi}_W
    \cap R[\mathcal{H}] = W$.
  \end{itemize}
  Since both $\Obl(\varphi/\psi)$ and $\Obl(\theta/\chi)$ are in $\lastl{\mathcal{H}}$,
  the transitional rule $D_2$ can be applied to
  $\last{\mathcal{H}}$. Thus 
  the proof search procedure either
  used the
  premisses
  \[
  \lastl{\mathcal{H}}^\Box,\varphi,\theta \seq \;
  \qquad
  \lastl{\mathcal{H}}^\Box,\psi \seq \chi
  \qquad
  \lastl{\mathcal{H}}^\Box,\chi \seq \psi
  \]
  of this rule application to create new histories by appending them to
  $\mathcal{H}$, or it found a history $\mathcal{H}'\cless\mathcal{H}$
  whose last sequent subsumes one of the premisses. 
  In either case for at
  least one premiss $\Sigma \seq \Pi$ there is a history
  $\mathcal{H}'$ s.t. $\Sigma \subseteq \lastl{\mathcal{H}'}$ and
  $\Pi \subseteq \lastr{\mathcal{H}'}$ and for which proof search fails. Moreover, for this $\mathcal{H}'$ by construction of $R$ we know that
  $\mathcal{H}R\mathcal{H}'$. Assume that $\Sigma \seq \Pi$ is
  the first premiss. 
  Then
  $\varphi,\theta \in \lastl{\mathcal{H}'}$, and hence 
  $\mathcal{H}' \in \sts{\varphi}_W \cap \sts{\theta}_W \cap R[\mathcal{H}]$ and
  the latter is non-empty. Then in particular 
  $X^c \cap R[\mathcal{H}] \subseteq (\sts{\varphi}_W\cap R[\mathcal{H}])^c \cap
  R[\mathcal{H}] = (\sts{\varphi}_W)^c \cap R[\mathcal{H}]$ is not equal to
  $\sts{\theta}_W \cap R[\mathcal{H}] = Z$. Similarly, if $\Sigma \seq
  \Pi$ is 
  one of the remaining 
  premisses 
  we obtain $Y \neq W$. Thus whenever $(X,Y) \in \eta(\mathcal{H})$
  and $(Z,W) \in \eta(\mathcal{H})$, then $Z \neq X^c \cap
  R[\mathcal{H}]$ or $Y \neq W$. 
  The reasoning for Cond.~\ref{item:four} is similar.\qed
\end{proof}

\begin{lemma}[Truth Lemma]\label{lem:truth-lemma}
  For every history $\mathcal{H} \in W$:
(i) If $\varphi \in
  \lastl{\mathcal{H}}$, then $\mathfrak{M}_{\Gamma \seq
    \Delta},\sigma,\mathcal{H} \forces \varphi$ and (ii)
if $\psi \in \lastr{\mathcal{H}}$, then $\mathfrak{M}_{\Gamma \seq
    \Delta},\sigma,\mathcal{H} \forces\neg \psi$.
\end{lemma}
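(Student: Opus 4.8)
The plan is to prove the Truth Lemma by simultaneous induction on the structure of the formula $\varphi$ (resp.\ $\psi$), proving (i) and (ii) together since the two halves feed into each other through the negation and implication cases. The base cases and the propositional connectives are routine and rely on the fact that Algorithm~\ref{alg:proof-search} saturates the current sequent under the one-premiss static rules before doing anything else, so the relevant subformulae have already been decomposed and placed in $\lastl{\mathcal{H}}$ or $\lastr{\mathcal{H}}$. For a variable $p$, part~(i) holds because $\sigma(p) = \sts{p}_W$ by definition, and part~(ii) holds because if $p \in \lastr{\mathcal{H}}$ then $\mathcal{H} \notin \sts{p}_W$ — here I would need the fact, guaranteed by minimality/non-derivability, that no sequent in the run is an initial sequent, so $p$ cannot occur on both sides of $\last{\mathcal{H}}$.

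First I would dispatch the propositional cases. For instance, if $\varphi = \varphi_1 \land \varphi_2 \in \lastl{\mathcal{H}}$, saturation under the left-conjunction rule puts both $\varphi_1, \varphi_2 \in \lastl{\mathcal{H}}$, and the induction hypothesis gives $\mathcal{H} \forces \varphi_1$ and $\mathcal{H} \forces \varphi_2$; the remaining connectives and the right-hand cases are analogous, with negation and implication crossing between (i) and (ii), which is exactly why the simultaneous formulation is needed. The $\Box$ case uses the accessibility relation $R$: if $\Box\varphi \in \lastl{\mathcal{H}}$, then rule $\T$ places $\varphi \in \lastl{\mathcal{H}}$ giving reflexivity, while the construction of $R'$ and the copying of boxed formulae into premisses of the transitional rules ensure that $\Box\varphi$ propagates to every $R$-successor, so that $R[\mathcal{H}] \subseteq \trs{\varphi}$; for (ii) with $\Box\psi \in \lastr{\mathcal{H}}$, a backwards application of rule $\4$ creates a successor containing $\psi$ on the right, and I would use the loop-detection clause in the definition of $R'$ together with the induction hypothesis to locate an $R$-successor falsifying $\psi$.

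The main obstacle will be the case of the deontic modality $\Obl(\varphi/\psi)$, which must be handled directly against the definition of $\eta$ rather than reduced by a static rule. For part~(i), if $\Obl(\varphi/\psi) \in \lastl{\mathcal{H}}$, the induction hypotheses let me compute $\trs{\varphi} \cap R[\mathcal{H}]$ and $\trs{\psi} \cap R[\mathcal{H}]$ in terms of the syntactic truth sets $\sts{\varphi}_W$ and $\sts{\psi}_W$, and I would show these satisfy the defining condition of $\eta(\mathcal{H})$, namely $\sts{\varphi}_W \cap R[\mathcal{H}] \subseteq \trs{\varphi} \cap R[\mathcal{H}]$ and $\sts{\psi}_W \cap R[\mathcal{H}] = \trs{\psi}_W \cap R[\mathcal{H}]$, so that the pair lands in $\eta(\mathcal{H})$ by taking the witnessing formula to be $\Obl(\varphi/\psi)$ itself. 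The delicate point is matching the asymmetry in the definition of $\eta$ (subset on the first coordinate, equality on the second), which in turn reflects the asymmetry between rule $\Mon$ (monotone in the obligation) and the tighter handling of the condition. For part~(ii), with $\Obl(\varphi/\psi) \in \lastr{\mathcal{H}}$, I must show $(\trs{\varphi} \cap R[\mathcal{H}], \trs{\psi} \cap R[\mathcal{H}]) \notin \eta(\mathcal{H})$; this is the genuinely hard direction, and I expect to argue by contradiction, assuming the pair is in $\eta(\mathcal{H})$ via some witnessing $\Obl(\theta/\chi) \in \lastl{\mathcal{H}}$ and then using the rejecting run of the proof search — specifically that rule $\Mon$ was applicable but failed on one of its premisses $\Gamma^\Box,\theta \seq \varphi$, $\Gamma^\Box, \chi \seq \psi$, or $\Gamma^\Box, \psi \seq \chi$ — to produce, via the induction hypothesis and the loop structure encoded in $R$, a point witnessing the failure of the required inclusion or equality, thereby contradicting membership in $\eta(\mathcal{H})$.
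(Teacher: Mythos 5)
Your proposal follows the paper's proof of this lemma essentially step by step: simultaneous induction on formula complexity, the saturation argument for the atomic, propositional and $\T$ cases, propagation of boxed antecedent formulae plus the loop edges of $R$ for the $\Box$ cases, the formula itself as witness in the definition of $\eta$ for part~(i), and the failed premisses of $\Mon$ in the rejecting run for part~(ii). Your part~(ii) argument is exactly the paper's and it does go through: for each candidate witness $\Obl(\theta/\chi) \in \lastl{\mathcal{H}}$ some premiss of the applicable $\Mon$ instance fails, and the resulting history, combined with both halves of the induction hypothesis, refutes either the inclusion on the first coordinate or the equality on the second.

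The step that does not close is precisely the ``delicate point'' you flag in part~(i) --- and, to be fair, the paper's own proof silently asserts the same step. With witness $\Obl(\varphi/\psi)$ itself, membership of $(\trs{\varphi}\cap R[\mathcal{H}], \trs{\psi}\cap R[\mathcal{H}])$ in $\eta(\mathcal{H})$ requires the equality $\sts{\psi}_W \cap R[\mathcal{H}] = \trs{\psi} \cap R[\mathcal{H}]$. The induction hypothesis yields only one inclusion, $\sts{\psi}_W \subseteq \trs{\psi}$ (part~(i) applied to $\psi$); the converse would require that every accessible history at which $\psi$ is semantically true actually contains $\psi$ in its left component, and nothing in the construction guarantees this, since a formula can be true at a history without ever having been decomposed there. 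Concretely, take the underivable sequent $\Obl(p/q\lor\neg q)\seq r$: the run consists of this sequent and the rejected $\D_1$-premiss $p \seq\;$, so $\sts{q\lor\neg q}_W = \emptyset$, forcing every pair in $\eta$ to have second coordinate $\emptyset$, while $\trs{q\lor\neg q}$ is the whole model; hence $\Obl(p/q\lor\neg q)$ comes out false at the root even though it lies in $\lastl{\mathcal{H}}$. So ``I would show these satisfy the defining condition'' is a step that would fail as stated; repairing it requires modifying the construction (e.g.\ defining the second coordinates of $\eta$ via semantic truth sets, or closing them under equivalence), not arguing harder from the induction hypothesis. Your proposal is thus faithful to the paper --- including its weakest point --- but neither your plan nor the paper's proof, read literally, discharges this case.
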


\begin{proof}
  We prove both statements simultaneously by induction on the
  complexity of 
  $\varphi$ resp.\ $\psi$. The base case and the cases where the
  main 
  connective of $\varphi$ resp.\ $\psi$ is a propositional 
  or $\Box$ are 
  standard (note that Alg.~\ref{alg:proof-search} 
  saturates every sequent under the static rules, i.e., the
  propositional rules and 
  $\T$, and that every
  transitional rule copies all the boxed formulae in the antecedent 
  into the premisses).
  If 
  $\varphi = \Obl(\theta /\chi)$, then by construction
  of $\eta$ we have 
  $(\sts{\theta}_W\cap R[\mathcal{H}],\sts{\chi}_W\cap
  R[\mathcal{H}]) \in \eta(\mathcal{H})$, and thus 
  $\mathfrak{M}_{\Gamma \seq \Delta},\sigma,\mathcal{H} \forces \Obl(\theta
  / \chi)$. Now suppose that $\psi =  \Obl(\xi / \gamma)$. To see
  that $\psi$ does not hold in $\mathcal{H}$ we show
  that for no $\Obl(\delta / \beta) \in \lastl{\mathcal{H}}$ we have
  $
  \sts{\delta}_W \cap R[\mathcal{H}] \subseteq \sts{\xi}_W \cap
  R[\mathcal{H}]
  $
   and 
  $
  \sts{\beta}_W \cap R[\mathcal{H}] = \sts{\gamma}_W \cap R[\mathcal{H}]\;.
  $
  The result then follows by construction of $\eta$ and the definition
  of 
  truth set. 
  If
  $\lastl{\mathcal{H}}$ does not contain any formula of the form
  $\Obl(\delta / \beta)$, then $\eta(\mathcal{H})$ is empty and we are done. Otherwise, there is such a 
  $\Obl(\delta / \beta)$ and the rule $\Mon$ can be applied backwards to
  $\last{\mathcal{H}}$. But then from the failed proof search
  for at least one of the 
  premisses
  \[
  \lastl{\mathcal{H}}^\Box,\delta \seq \xi
  \qquad
  \lastl{\mathcal{H}}^\Box,\gamma \seq \beta
  \qquad
  \lastl{\mathcal{H}}^\Box,
  \beta \seq \gamma
  \]
  we obtain a history $\mathcal{H}'$ with $\mathcal{H}R\mathcal{H}'$ whose last sequent subsumes this
  premiss. But then as above
  either $\sts{\delta}_W \cap R[\mathcal{H}] \not \subseteq
  \sts{\xi}_W \cap R[\mathcal{H}]$, if it is obtained 
  from the
  first premiss, or $\sts{\beta}_W \cap R[\mathcal{H}] \neq \sts{\gamma}_W \cap
  R[\mathcal{H}]$ otherwise. 
  \qed
\end{proof}

\begin{theorem}[Completeness]
\label{thm:compl-by-countermodel-constr}
  For every sequent $\Gamma \seq \Delta$ we have: if $\Land \Gamma
  \to \Lor \Delta$ is valid in every m-model, then $\deriv_{\GDL}
  \Gamma \seq \Delta$.
\end{theorem}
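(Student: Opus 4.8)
The plan is to argue by contraposition: I will show that if $\Gamma \seq \Delta$ is \emph{not} derivable in $\GDL$, then $\Land\Gamma \to \Lor\Delta$ fails in some m-model, namely in the very countermodel $\mathfrak{M}_{\Gamma \seq \Delta}$ constructed above. Since essentially all the work is already packaged in Lemma~\ref{lem:model} and the Truth Lemma (Lemma~\ref{lem:truth-lemma}), this final step amounts to assembling the pieces and evaluating the constructed model at the right world.

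First I would invoke Proposition~\ref{prop:correctness}: since $\not\deriv_\GDL \Gamma \seq \Delta$, the proof search procedure does not accept the input $[\Gamma \seq \Delta]$, so there is a rejecting run of Algorithm~\ref{alg:proof-search} on this input. This run is exactly what the construction preceding Lemma~\ref{lem:model} feeds on, yielding the structure $\mathfrak{M}_{\Gamma \seq \Delta} = (W,R,\eta),\sigma$, which by Lemma~\ref{lem:model} is a genuine m-model.

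Next I would single out the world corresponding to the root of the run, i.e.\ the (saturated) initial history $\mathcal{H}_0 \in W$ whose last sequent is obtained from $\Gamma \seq \Delta$ by backward saturation under the one-premiss static rules. Since those rules copy their principal formulae into the premisses, we have $\Gamma \subseteq \lastl{\mathcal{H}_0}$ and $\Delta \subseteq \lastr{\mathcal{H}_0}$. Applying part~(i) of the Truth Lemma to each $\varphi \in \Gamma$ and part~(ii) to each $\psi \in \Delta$ gives $\mathfrak{M}_{\Gamma\seq\Delta},\sigma,\mathcal{H}_0 \forces \varphi$ for all $\varphi \in \Gamma$ and $\mathfrak{M}_{\Gamma\seq\Delta},\sigma,\mathcal{H}_0 \forces \neg\psi$ for all $\psi \in \Delta$. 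Hence $\mathcal{H}_0$ satisfies $\Land\Gamma$ while falsifying every disjunct of $\Lor\Delta$, so $\mathfrak{M}_{\Gamma\seq\Delta},\sigma,\mathcal{H}_0 \not\forces \Land\Gamma \to \Lor\Delta$. This witnesses that $\Land\Gamma \to \Lor\Delta$ is not valid in all m-models, establishing the contrapositive and hence the theorem.

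The only point requiring care — and the place where I would expect any gap to hide — is the bookkeeping around the root world: one must check that the saturated initial history is indeed among the histories collected into $W$ (it is, being the input on which the run starts), and that saturation only \emph{adds} formulae, so that the inclusions $\Gamma \subseteq \lastl{\mathcal{H}_0}$ and $\Delta \subseteq \lastr{\mathcal{H}_0}$ genuinely hold and the Truth Lemma applies verbatim. Beyond this routine verification there is no further obstacle, since soundness preservation (Lemma~\ref{lem:soundness-preservation}) plays no role in this direction and the structural conditions on $\mathfrak{M}_{\Gamma\seq\Delta}$ are already guaranteed by Lemma~\ref{lem:model}.
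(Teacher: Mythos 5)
Your proof is correct and takes essentially the same route as the paper's: contraposition, using Lem.~\ref{lem:termination} and Prop.~\ref{prop:correctness} to obtain a rejecting run of Algorithm~\ref{alg:proof-search}, then Lem.~\ref{lem:model} and the Truth Lemma (Lem.~\ref{lem:truth-lemma}) applied at the initial history to satisfy $\Land\Gamma \land \neg\Lor\Delta$ and thus falsify the sequent's interpretation. Your extra bookkeeping about saturation only adding formulae (so that $\Gamma \subseteq \lastl{\mathcal{H}_0}$ and $\Delta \subseteq \lastr{\mathcal{H}_0}$) is a legitimate detail that the paper glosses over by evaluating directly at the world $[\Gamma \seq \Delta]$.
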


\begin{proof}
  If 
  $\not\deriv_{\GDL}\Gamma \seq \Delta$, 
  then by
  Lem.~\ref{lem:termination} and Prop.~\ref{prop:correctness} 
  the procedure in Alg.~\ref{alg:proof-search}
  terminates and rejects the input $[\Gamma \seq \Delta]$. Thus by
  Lem.~\ref{lem:model}
  and~\ref{lem:truth-lemma} we have $\mathfrak{M}_{\Gamma \seq
    \Delta},[\Gamma \seq \Delta] \forces \Land \Gamma \land \neg
  \Lor\Delta$ and hence $\Land\Gamma\to\Lor\Delta$ is not m-valid.\qed
\end{proof}

\noindent
Since only finitely many histories occur in a run of the proof search
procedure, the constructed model is finite and by standard methods we immediately obtain:

\begin{corollary}
  The logic $\MDL$ has the finite model property and is decidable.\qed
\end{corollary}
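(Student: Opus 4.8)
The plan is to read off both properties from the terminating proof search and the countermodel construction that have just been established, rather than proving anything new. First I would make explicit the bridge between the logic and the calculus: instantiating the Soundness and Completeness theorem (Thm.~\ref{thm:strong-compl}) with the empty set of assumptions, empty antecedent and succedent $\{\varphi\}$, and combining it with Cut elimination (Thm.~\ref{thm:cut-elim}) together with admissibility of the structural rules (Lem.~\ref{lem:adm-con}), yields the equivalence $\varphi \in \bMDL$ iff $\deriv_\GDL\; \seq \varphi$. This reduces every question about theoremhood in $\bMDL$ to cut-free derivability of $\seq \varphi$ in $\GDL$, which is precisely what the proof search procedure and the countermodel construction are built to decide.

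For the finite model property I would argue contrapositively. Suppose $\varphi \notin \bMDL$, so $\not\deriv_\GDL\; \seq \varphi$ by the bridge. Applying the countermodel construction underlying the Completeness theorem (Thm.~\ref{thm:compl-by-countermodel-constr}) to the sequent $\seq \varphi$ produces the m-model $\mathfrak{M}_{\seq\varphi}$, whose set of worlds $W$ is by definition the set of \emph{histories} occurring in the rejecting run of Alg.~\ref{alg:proof-search} on input $[\;\seq\varphi]$, and in which, by the Truth Lemma (Lem.~\ref{lem:truth-lemma}), the initial history $[\;\seq\varphi]$ refutes $\varphi$. The crucial observation is that by Termination (Lem.~\ref{lem:termination}) this run is finite, so only finitely many histories occur and hence $W$ is finite. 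Thus every non-theorem of $\bMDL$ is refuted in a \emph{finite} m-model; combining this with soundness of $\GDL$ (the Corollary to Lem.~\ref{lem:soundness-preservation}), which guarantees that every theorem is valid in all m-models and hence in all finite ones, gives the finite model property, i.e.\ $\varphi \in \bMDL$ iff $\varphi$ is valid in every finite m-model.

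Decidability then follows from the same terminating procedure: by Termination (Lem.~\ref{lem:termination}) the proof search halts on every input, and by Correctness (Prop.~\ref{prop:correctness}) it accepts $[\;\seq\varphi]$ iff $\deriv_\GDL\; \seq \varphi$, so composing with the bridge turns Alg.~\ref{alg:proof-search} into an effective decision procedure for membership in $\bMDL$. There is essentially no hard step, since every ingredient has already been proved; the only point requiring care is the finiteness claim, namely verifying that the worlds of $\mathfrak{M}_{\seq\varphi}$ are exactly the finitely many histories generated by a terminating run, so that the constructed countermodel is genuinely finite and not merely recursively presented.
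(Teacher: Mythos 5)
Your proposal is correct and follows exactly the paper's intended argument: the paper's own justification is the one-sentence remark that the countermodel $\mathfrak{M}_{\Gamma\seq\Delta}$ is finite because a terminating run of Alg.~\ref{alg:proof-search} produces only finitely many histories, with the ``standard methods'' being precisely the bridge via Thm.~\ref{thm:strong-compl}, cut elimination, and the terminating, correct proof search that you spell out. You have simply made explicit the details the paper leaves implicit, including the one point genuinely worth checking, namely that the worlds of the countermodel are exactly the finitely many histories of the rejecting run.
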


\section{Applications to Indology}
\label{sec:logic-at-work}
We show now that despite being reasonably simple, $\MDL$ is strong enough to 
derive consequences about topics discussed by {\Mimamsa} authors (Example \ref{ex:topics}) 
and to provide useful insights on the reason why the seemingly conflicting statements in the \'Syena example are not contradictory.

\begin{example}
\label{ex:topics}
Consider the following excerpt:
``\textit{Since the Veda is for the purpose of an action, whatever in it does not aim at an action is meaningless and therefore must be said not to belong to the permanent Veda}'' (PMS 1.2.1).
In other words: each Vedic prescription should promote an action. Given that no actual action can have a logical contradiction as an effect, a logical contradiction cannot be enjoined by an obligation. This can be translated into the formula $\neg \Ob ( \bot /  \theta )$,
one of the forms of axiom $D$,
which 
is derivable in $\G_{\MDL}$ as follows:
\[
\infer[\seq  \neg]{\; \seq \neg \Ob ( \bot /  \theta )}
{\infer[\D_1]{\Ob ( \bot /  \theta ) \seq \neg \Ob ( \bot /  \theta ) }{\infer[\bot \seq]{ \bot \seq\; }{ }}
}
\]

\end{example}

\vspace{-0.5cm}
\subsection*{A logical perspective on the \'Syena controversy}
\vspace{-0.08cm}
In {\Mimamsa} literature many explanations of the reasons why the sentences A. and B. in Sec. \ref{sec:extraction} are not contradictory have been proposed. 
We show that the $\MDL$ solution matches the one of Prabh\={a}kara, one of the chief {\Mimamsa} authors, and makes it formally 
meaningful. 

Consider the sequent in the proof of Thm.~\ref{thm:syena}. Since it is not derivable in $\GDL$, using 
Algorithm~\ref{alg:proof-search} we can construct a model for the formula 
\begin{equation}
\label{form:negsyseq}
 \Box (\harmenemy \to \harm) \land  \Box (\syena\to\harmenemy) \land \Box \Obl (\neg \harm / \top) \land \Box \Obl (\syena / \desharm)
\end{equation}
However, to make the solution clearer, we define below a simpler model $\mathfrak{M}_{0} = ( W_{0}, R_{0}, \eta_{0}), \sigma_{0}$ based on Vedic concepts.
The domain $W_{0}$ is $\lbrace w_{i} \mid 1 \leq i \leq 8\rbrace$, represented in Fig. \ref{fig:syenamodel} by circles. The accessibility relation $R_{0}$ is universal, i.e. for any $1 \leq i,j \leq 8$ it holds that $R_{0} ( w_{i}, w_{j})$; it is not represented in the figure for better readability. The map $\eta_{0}$ is such that $\eta_{0} (w_{i}) = \lbrace ( X ,  W _{0}) \; \mid \; X \subseteq W_{0}, \; \lbrace w_{1}, w_{5} \rbrace \subseteq X \rbrace  \; \bigcup \; \lbrace ( Y ,  \lbrace w_{5}, w_{6}, w_{7}, w_{8} \rbrace  ) \; \mid \; Y \subseteq W_{0}, \; \lbrace w_{4}, w_{8} \rbrace \subseteq Y \rbrace$. The figure represents only the elements of the neighbourhood of $w_{1}$ that are relevant to the valuation of our deontic statements. Each element corresponds to a kind of arrow: solid arrows for the statement about \'Syena and dashed ones for the obligation not to harm anyone. An element of the neighbourhood is a pair of sets of states, to represent it we draw an arrow from each state belonging to the second element of the pair to each one belonging to the first element of the pair. 
The function $\sigma_{0}$ is the valuation of the model and it is such that $\sigma_{0} (w_{1}) = \emptyset $; $\sigma_{0} (w_{2}) =  \lbrace \harm \rbrace $; $\sigma_{0} (w_{3}) =  \lbrace \harm ,$  $\harmenemy \rbrace $; $\sigma_{0} (w_{4}) =  \lbrace \harm ,$  $\harmenemy ,$ $\syena \rbrace $; $\sigma_{0} (w_{5}) =  \lbrace \desharm \rbrace $; $\sigma_{0} (w_{6}) =  \lbrace \harm ,$ $\desharm \rbrace $; $\sigma_{0} (w_{7}) =  \lbrace \harm ,$ $\harmenemy ,$ $\desharm \rbrace $; and $\sigma_{0} (w_{8}) =  \lbrace \harm ,$  $\harmenemy ,$ $\syena ,$  $\desharm \rbrace $. Clearly $\mathfrak{M}_{0}$ satisfies all the requirements stated in Def.~\ref{def:m-frame}.
\begin{figure}[t]

\tikzstyle{crc}=[circle, minimum size=5mm, inner sep=0pt, draw]

\begin{tikzpicture}[node distance=1.5cm,auto,>=latex', scale=0.5]
    \node [crc] (1) [label=180: {}] at (0,0) {$w_{1}$};
    \node [crc] (2) [label=0: {$ \harm$}] at (2,-1) {$w_{2}$};
    \node [crc] (3) [label=0: {$\harm$, $ \harmenemy$}] at (3,-3) {$w_{3}$};
    \node [crc] (4) [label=0: {$\harm$, $\harmenemy$, $\syena$}] at (2,-5) {$w_{4}$};
    \node [crc] (5) [label=290: {$\desharm$}] at (0,-6) {$w_{5}$};
    \node [crc] (6) [label=180: {$\harm$, $\desharm$}] at (-2,-5) {$w_{6}$};
    \node [crc] (7) [label=180: {$\harm$, $\harmenemy$, $\desharm$}] at (-3,-3) {$w_{7}$};
    \node [crc] (8) [label=180: {$\harm$, $\harmenemy$, $\syena$, $\desharm$}] at (-2,-1) {$w_{8}$};
    \foreach \i in {2,3,4,6, 8}
            \path[->] (\i) edge [] node {} (1);
    
    \path[->] (5) edge [bend right=5] node {} (1);

    \path[->] (7) edge [bend right=5] node {} (1);

    \path[->] (1) edge [loop above] node {} (1);

    \path[->] (5) edge [loop below] node {} (5);

    \foreach \i in {4, 8}
            \path[->] (\i) edge [bend right=5] node {} (5);

    \path[->] (1) edge [bend right=5] node {} (5);

    \path[->] (3) edge [bend right=5] node {} (5);
    
    \foreach \i in {2,6,7}
            \path[->] (\i) edge [] node {} (5);     

    \foreach \i in {6,7,8}
            \path[->] (\i) edge [dashed] node {} (4);

    \path[->] (5) edge [dashed, bend right=5] node {} (4);

    \foreach \i in {6,7}
            \path[->] (\i) edge [dashed] node {} (8);

    \path[->] (5) edge [dashed, bend right=5] node {} (8);

    \path[->] (8) edge [dashed, loop above] node {} (8);
\end{tikzpicture}
\caption{The model $\mathfrak{M}_{0}$ for the \'Syena controversy}
\label{fig:syenamodel}
\end{figure}
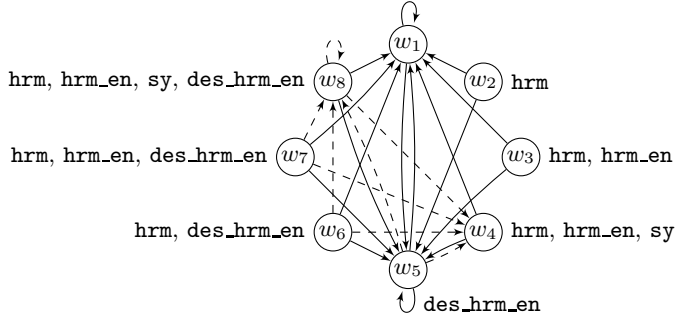

The definition of $\mathfrak{M}_{0}$ is based on \textit{adhik\={a}ra} (\cite{Freschi2012}, pp.147-155), a central concept in Prabh\={a}kara's analysis of the Vedas, which identifies the addressee of a prescription through their desire for the results. In the prescription about the \'Syena sacrifice, the \textit{adhik\={a}ra} corresponds to the desire to harm an enemy; the results correspond to the fact that an enemy is harmed through the performance of \'Syena, and, more generally, to the fact that someone is harmed. Some combinations of these facts are impossible if we need to satisfy $ \Box (\harmenemy \to \harm)$ and $ \Box (\syena\to\harmenemy)$, thus all the possibilities are the eight states in the model.
The accessibility relation accounts for the possible changes of subject's condition. The neighbourhood of a state encodes the obligations holding for that state, and given that these obligations are the same for each state, the neighbourhood is the same too. Thus the arrows show the changes of condition promoted by the obligations.

We show now that the formula \eqref{form:negsyseq} is true in the state $w_{1}$.
First, all its conjuncts without deontic operators are true in all states.
Secondly, the formula $\Box \Obl (\neg  \harm / \top)$ is true in $w_{1}$ if
$(\trs{\neg \harm}_{\mathfrak{M}_{0}} \cap R_{0}[s], \trs{\top}_{\mathfrak{M}_{0}} \cap R_{0}[s]) \in \eta_{0}(s) $ holds for all $s$ such that $R_{0}(w_{1}, s)$. 
Given that $ (  \lbrace w_{1} , w_{5} \rbrace  ,  W_{0} )$ belongs to $\eta_{0}(s)$ for all $s \in W_{0}$, the formula $\Obl(\neg  \harm / \top)$ is true in all states. For the formula $\Box \Obl (\syena / \desharm)$ the valuation is similar.
Hence $\mathfrak{M}_{0}$ is a model of \eqref{form:negsyseq} and, by Thm. \ref{thm:strong-compl} and \ref{thm:deduction}, this provides a semantic proof of Thm. \ref{thm:syena}.

Among the different solutions for the \'Syena controversy, the model $\mathfrak{M}_{0}$ matches Prabh\={a}kara's one
which can be summarised in his statement:
``\textit{A prescription regards what has to be done. But it does not say that
it has to be done}'' (\emph{B\d{r}hat\={\i}} I, p. 38, l. 8f).
Indeed in state $w_{1}$ 
no conflicting prescriptions are applicable and all obligations are fulfilled. We call this a \textit{Vedic state}. The existence of such a state shows that an agent can find a way not to transgress any prescription, and that the Vedic prescriptions do not imply that the \'Syena sacrifice has necessarily to be done.
Our model also explains Prabh\={a}kara's claim that \textit{the Vedas do not impel one to perform the malevolent sacrifice
\'Syena, they only say that it is obligatory}, which was wrongly considered meaningless e.g.
in \cite{Stcherbatsky1926}.

\begin{remark}
\label{AI}
Our analysis highlights that Vedic prescriptions are ``instructions to attain desired outcomes'' rather than absolute imperatives.
A \textit{Vedic state} provides a way not to transgress any obligation, but at the same time there are norms, e.g., the one about \'Syena, for those who intend to transgress some obligations, but nonetheless do not want to altogether reject the Vedic principles.
This is explicit in another {\Mimamsa} author, Ve\.{n}ka\d{t}an\={a}tha, who claims that the \'{S}yena is the best way to kill one's enemy if one is determined to transgress the general prescription not to perform violence.
This feature suggests a possible use of suitable extensions of $\bMDL$ to reason about machine ethics, where indeed choices between actions 
that should be avoided often arise. 
Consider a self-driving vehicle that has no choice but to harm some people. There is no perfect solution
but, nevertheless, the system should be able to provide instructions that promote imperfect outcomes in order to avoid the worst-case scenario.
\end{remark}

\section{Conclusions and Future Work}

We defined a novel deontic logic justified by principles elaborated
by {\Mimamsa} authors over the last 2,500 years,
and used its proof theory and semantics to analyse a notoriously
challenging example.
The fruits of this synergy of Logic and Indology can be gathered from both sides: 
The vast body of knowledge constituted by {\Mimamsa} texts can provide interesting new stimuli for the
logic community, and at the same time
our methods can lead to 
new tools for the analysis of philosophical and sacred texts.
Our investigation also raises a number of further research directions,
such as 
(i) a formal analysis of the concept of prohibition as discussed by {\Mimamsa} authors. Moreover, (ii) 
among the about 200 considered\footnote{Not all {\Mimamsa} metarules have been translated from Sanskrit so far, see \cite{Atinerpaper}.}  \emph{ny\={a}ya}s (50 of which were
on deontic principles), 
some hinted at the need for extending $\bMDL$ in
various directions: e.g., the principle ``\textit{the
agent of a duty needs to be the one identified by a
given prescription}'' (PMS 6.1.1--3) seems to
require 
first-order quantification; some metarules that distinguish between
different repetitions of the same action suggest the introduction of
{\it temporal operators}; finally the fact that \'{S}Bh 1.1.1 asserts
that the Vedas prevail over other authoritative texts suggests the
need of a system to manage conflicts among different authorities, a
feature 
also important 
for reasoning about ethical machines \cite{Vardi}.
Finally,
(iii)~while the metarules considered for $\bMDL$ are common to the
{\Mimamsa} school, there are additional principles 
employed only by specific authors. 
Their identification and formalisation might shed light on the strength of the different interpretations of 
various {\Mimamsa} authors and, e.g., help arguing for 
the conjecture that Kum\={a}rila's interpretation
is more explicative than 
Ma\d{n}\d{d}ana's.

\bibliographystyle{splncs03}


\end{document}